\newcommand{\ts}{\ensuremath{\mathsf{TraceSampler}}}
\newcommand{\ds}{\ensuremath{\mathsf{drawSample}}}
\newcommand{\dsr}{\ensuremath{\mathsf{drawSample\_rec}}}
\newcommand{\sfa}{\ensuremath{\mathsf{sampleFromADD}}}
\newcommand{\uni}{\ensuremath{\mathsf{UniGen}}}
\newcommand{\utwo}{\ensuremath{\mathsf{UniGen2}}}
\newcommand{\uniw}{\ensuremath{\mathsf{UniWit}}}
\newcommand{\amc}{\ensuremath{\mathsf{ApproxMC}}}
\newcommand{\athree}{\ensuremath{\mathsf{ApproxMC3}}}
\newcommand{\tr}{\ensuremath{\omega}}
\newcommand{\Tr}{\ensuremath{\Omega}}
\newcommand{\Path}{\ensuremath{\pi}}
\newcommand{\Paths}{\ensuremath{\Pi}}
\newcommand{\lo}{\ensuremath{lo}}
\newcommand{\hi}{\ensuremath{hi}}
\newcommand{\mi}{\ensuremath{mid}}
\newcommand{\lvlDiff}{\ensuremath{\gamma}}
\newcommand{\pc}[3]{\ensuremath{c(#1,#2,#3)}}
\newcommand{\rso}{\ensuremath{\alpha}}
\newcommand{\rst}{\ensuremath{\beta}}
\newcommand{\version}[2]{#1}
\algnewcommand\algorithmicinput{\textbf{Input:}}
\algnewcommand\INPUT{\item[\algorithmicinput]}
\algnewcommand\algorithmicoutput{\textbf{Output:}}
\algnewcommand\OUTPUT{\item[\algorithmicoutput]}
\title{On Uniformly Sampling Traces of a Transition System (Extended Version)}
\thanks{Author names are sorted alphabetically and do not reflect extent of contribution}
\thanks{This is the author's version of the work which includes additional information and analyses not present in the peer-reviewed version. It is posted here for	your personal use. Not for redistribution. The definitive version is to be published in proceedings of {ICCAD 2020}, https://doi.org/10.1145/3400302.3415707.}
\author{Supratik Chakraborty}
\affiliation{%
	\department{Dept. of Computer Science \& Engg.}
	\institution{IIT Bombay}
	\city{Mumbai}
	\country{India}
}
\email{supratik@cse.iitb.ac.in}
\author{Aditya A. Shrotri}
\affiliation{%
	\department{Department of Computer Science}
	\institution{Rice University}
	\city{Houston}
	\country{USA}
}
\email{as128@rice.edu}
\author{Moshe Y. Vardi}
\affiliation{%
	\department{Department of Computer Science}
	\institution{Rice University}
	\city{Houston}
	\country{USA}
}
\email{vardi@rice.edu}
\begin{document}
	\begin{abstract}
  A key problem in constrained random verification (CRV) concerns generation
  of input stimuli 
  that result in
  good coverage 
  of the system's runs in targeted corners of its
  behavior space.  Existing CRV solutions 
  however provide no formal guarantees on the distribution of the
  system's runs.
  In this paper, we take a first step towards solving this problem.
  We present an algorithm based on Algebraic Decision Diagrams for
  sampling bounded \emph{traces} (i.e. sequences of states) of a
  sequential circuit with provable uniformity (or bias) guarantees,
  while satisfying given constraints.  
  We have implemented our algorithm in a tool called {\ts}.
  Extensive experiments show that {\ts} outperforms alternative approaches
  that provide similar uniformity guarantees. 
\end{abstract}

	\maketitle
	
	\section{Introduction}
\label{sec:intro}
\emph{Simulation-based functional verification} is a crucial yet
time-consuming step in modern electronic design automation
flows~\cite{foster2015trends}.  In this step, a design is simulated
with a large number of input stimuli, and signals are monitored to
determine if coverage goals and/or functional requirements are met.
For complex designs, each input stimulus typically spans a large
number of clock cycles.  Since exhaustive simulation is impractical
for real designs, using ``good quality'' stimuli that result in
adequate coverage of the system's runs in targeted corners is
extremely important~\cite{bening2001principles}.  \emph{Constrained
  random verification, or
  CRV,}~\cite{yuan2006,Bhadra2007,kitchen2007,naveh2007constraint}
offers a practical solution to this problem.  In CRV, the user
provides constraints to ensure that the generated stimuli are valid
and also to steer the system towards bug-prone corners.  To ensure
\emph{diversity}, 
CRV allows randomization in the choice of stimuli satisfying a set of
constraints.  This can be very useful when the exact inputs needed to
meet coverage goals or to test functional requirements are not
known~\cite{kitchen2007,recent-diversity-paper}.  In such cases, it is
best to generate stimuli such that the resulting runs are uniformly
distributed in the targeted corners of its behavior space.
Unfortunately, state-of-the-art CRV
tools~\cite{uvm,e,survey,systemverilog1,systemverilog2} do not permit
such uniform random sampling of input stimuli.  Instead, they allow
inputs to be assigned random values from a constrained set at specific
simulation steps.  This of course lends diversity to the generated
stimuli.  However, it gives no guarantees on the distribution of the
resulting system runs. In this paper, we take a first step towards
remedying this problem.  Specifically, we present a technique for
generating input stimuli that \emph{guarantees} uniform (or
user-specified bias in) distribution of the resulting system runs. 
\version{Note that this is significantly harder than generating any one run satisfying a set of constraints.}{}

We represent a run of the system by the sequence of states through
which it transitions in response to a (multi-cycle) input stimulus.
Important coverage metrics (viz. transition coverage, state sequence
coverage, etc.~\cite{cov-cite}) are usually boosted by choosing
stimuli that run the system through diverse state sequences.
Similarly, functional requirements (viz. assertions in
SystemVerilog~\cite{systemverilog2}, PSL~\cite{ieee1850_2010}, Specman
E~\cite{e}, UVM~\cite{uvm} and other formalisms~\cite{fr-cite}) are
often stated in terms of temporal relations between states in a run of
the system.  Enhancing the diversity of state sequences in runs 
therefore improves the chances of detecting violations, if any,
of functional requirements.  Consequently, generating input stimuli
such that the resulting sequences of states, or \emph{traces}, are
uniformly distributed among all traces consistent with the given
constraints is an important problem.  Significantly, given a sequence
of states and the next-state transition function, the input stimuli
needed to induce the required state transitions at each clock cycle
can be easily obtained by independent SAT/SMT calls for each cycle.
Hence, our focus in the remainder of the paper is the core problem of
sampling a system's traces uniformly at random from the set of all
traces (of a given length) that satisfy user-specified constraints.

To see why state-of-the-art CRV
techniques~\cite{uvm,e,survey,systemverilog1,systemverilog2} often
fail to generate stimuli that produce a uniform distribution of
traces, consider the sequential circuit with two 
latches ($x_0$ and $x_1$) and one primary input, shown in
Fig.~\ref{fig:ckt}a.  The state transition diagram of the circuit is
shown in Fig.~\ref{fig:ckt}b. Suppose we wish to uniformly sample
traces that start from the initial state $s_0 = (x_1=0, x_0=0)$ and
have $4$ consecutive state transitions.  From Fig.~\ref{fig:ckt}b,
there are $7$ such traces: $ \tr_1 = s_0s_1s_1s_1s_1$, $ \tr_2 =
s_0s_1s_1s_1s_2$, $ \tr_3 = s_0s_1s_1s_2s_2$, $\tr_4 =
s_0s_1s_2s_2s_2$, $\tr_5 = s_0s_3s_1s_1s_1$, $\tr_6 = s_0s_3s_1s_1s_2$
and $\tr_7 = s_0s_3s_1s_2s_2$.  Hence, each of these traces must be
sampled with probability $1/7$. Unfortunately, the state transition
diagram of a sequential circuit can be exponentially large (in number
of latches), and is often infeasible to construct explicitly.  Hence
we must sample traces without generating the state transition diagram
explicitly.  The primary facility in existing CRV techniques to
attempt such sampling is to choose values of designated inputs
randomly at specific steps of the simulation.  In our example, without
any information about the state transition diagram, the primary input
of the circuit in Fig.~\ref{fig:ckt}a must be assigned a value 0 (or
1) with probability $1/2$ independently in each of the $4$ steps of
simulation.  This produces the traces $ \tr_1$ and $\tr_2$ with
probability $1/16$ each, $ \tr_3$, $\tr_5$ and $\tr_6$ with
probability $1/8$ each, and $\tr_4$ and $\tr_7$ with probability $1/4$
each.  Notice that this is far from the desired uniform distribution.
In fact, it can be shown that for every
choice of bias for sampling $0/1$ values of the primary input at each
state, we get a non-uniform distribution of $\tr_1$ through $\tr_7$.
\begin{figure}
	\centering
	\includegraphics[width=2.8in]{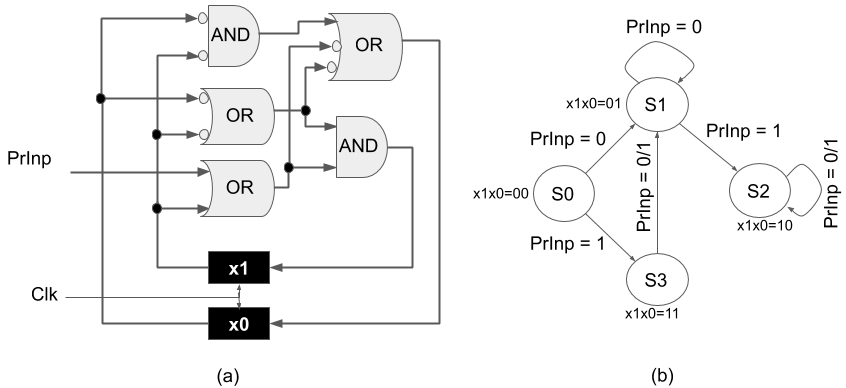}
	\caption{(a) Sequential circuit, (b) State transition diagram}
	\label{fig:ckt}
	\vspace{-0.5cm}
\end{figure}

The trace-sampling problem can be shown to be at least as hard as
uniformly sampling satisfying assignments of Boolean formulas.  The
complexity of the latter problem has been extensively
studied~\cite{Sipser83,jerrum1986random,BGP2000}, and no efficient
algorithms are known.  Therefore, efficient algorithms for sampling
traces are unlikely to exist.  Nevertheless, a trace sampling
technique that works efficiently in practice for many problem
instances is likely to be useful even beyond CRV, viz. in test
generation using Bounded Model Checking~\cite{hamon2004generating}.

The primary contributions of this paper are as follows:
\begin{enumerate}
  \item A novel algorithm for sampling fixed-length traces of a
    transition system using Algebraic Decision Diagrams
    (ADDs)~\cite{bahar1997algebric}, with provable guarantees of
    uniformity (or user-provided bias). The following are distinctive
    features of our algorithm.
    \begin{enumerate}
      \item It uses iterative squaring, thereby requiring only $\log_2
        N$ ADDs to be pre-computed when sampling traces of $N$
        consecutive state transitions.  This 
        allows our algorithm to scale to traces of a few hundred
        transitions in our experiments.
      \item It is easily adapted when the trace length is not
        a power of $2$, and also when implementing weighted
        sampling of traces with multiplicative weights.
      \item It pre-compiles the $i$-step transition relation for
        $\log_2 N$ different values of $i$ to ADDs.  This allows it to
        quickly generate multiple trace samples once the ADDs are
        constructed. Thus the cost of ADD construction gets amortized
        over the number of samples, which is beneficial in CRV
        settings.
    \end{enumerate}
    \item A comparative study of an implementation of our algorithm
      (called {\ts}) with alternative approaches based on
      (almost)-uniform sampling of propositional models, that provide
      similar uniformity guarantees.  Our experiments demonstrate that
      our approach offers significant speedup and is the fastest over
      90\% of the benchmarks.
\end{enumerate}


	\section{Preliminaries}
\label{sec:prelim}
\subsection{Transition Systems and Traces}
A synchronous sequential circuit with $n$ latches implicitly
represents a transition system with $2^n$ states.  Hence, synchronous
sequential circuits serve as succinct representations of finite-state
transition systems. We use ``sequential circuits'' and ``transition
systems'' interchangeably in the remainder of the paper to refer to
such systems.

Formally, a transition system with $k$ Boolean state variables $X =
\{x_0,\ldots,x_{k-1} \}$ and $m$ primary inputs is a $5$-tuple $
(S,\Sigma,t,I,F)$, where $S = \{0,1\}^k $ is the set of states,
$\Sigma = \{0,1\}^m$ is the input alphabet, $I \subseteq S$ is the set
of initial states, $F \subseteq S$ is the set of target (or final)
states, and $t:S\times\Sigma\rightarrow S $ is the state transition
function such that $t(s,a) = s' $ iff there is a transition from state
$ s\in S $ on input $ a \in \Sigma $ to state $s' \in S$.  We view
each state in $S = \{0,1\}^k$ as a valuation of $(x_{k-1} \ldots
x_0)$.  For notational convenience, we use the decimal representation
of the valuation of $(x_{k-1} \ldots x_0)$ as a subscript to refer to
individual states. For instance, $s_0$ and $s_{2^k-1}$ are the states
with all-zero and all-one assignments to $x_{k-1} \ldots x_0$
respectively. We refer to multiple versions of the state
variables $X$ as $ X^0, X^1,\ldots $

Given a transition system, a \emph{trace} $\tr$ of length $ N~(> 0) $
is a sequence of $N+1$ states such that $ \tr[0] \in I $, $ \tr[N]\in
F $ and $ \forall i\in\{0,\ldots,N-1\}\,\,\, \exists a \in \Sigma
\,\,\,s.t.\,\,\, t(\tr[i],a) = \tr[i+1] $, where $ \tr[i] $ represents
the $i^{th}$ state in the trace. We denote the set of all traces of
length $N$ by $ \Tr_N $.  Given a trace $\tr \in \Tr_N$, finding an
input sequence $\alpha \in \Sigma^N$ such that the $i^{th}$ element,
viz. $\alpha[i]$, satisfies $\tr[i+1] = t(\tr[i], \alpha[i])$ for all
$i \in \{0, \ldots N-1\}$, requires $N$ independent SAT solver
calls. 
With state-of-the-art SAT solvers~\cite{soos2009}, this is unlikely to be a
concern with the number of primary inputs $m$ ranging upto tens of
thousands.  Therefore, finding a sequence of inputs that induces a
trace is relatively straightforward, and we will not dwell on this any
further.  Our goal, instead, will be to sample a trace $ \tr \in \Tr_N
$ uniformly at random. Formally, if the random variable $Y$
corresponds to a random choice of traces, we'd like to have
$	\forall \tr\in \Tr_N \,\,\,\, \Pr[Y=\tr] = \frac{1}{|\Tr_N|}$.
Given a weight function $ w:\Tr_N \rightarrow \mathbb{R}^+ $, the
related problem of weighted trace sampling requires us to sample such that
$\forall \tr\in \Tr_N \,\,\,\, \Pr[Y=\tr] =
\frac{w(\tr)}{\sum_{\tr\in\Tr_N} w(\tr)}$.

Since we are concerned only with sequences of states, we will
henceforth assume that transitions of the system are represented by
a transition relation $\widehat{t}: S \times S \rightarrow \{0,1\}$,
where $ \widehat{t}(s,s') \Leftrightarrow \exists a \in
\Sigma\,\,\,s.t.\,\,\,t(s,a) = s'$. For notational convenience, we
abuse notation and use $ t(s,s')$ for $\widehat{t}(s,s')$, when there is
no confusion.

A multiplicative weight function assigns a weight to each state
transition, and defines the weight of a trace as the product of
weights of the transitions in the trace. Formally, let $\widehat{w}: S
\times S \rightarrow \mathbb{R}^{\ge 0} $ be a weight function for
state transitions, where $\widehat{w}(s_i, s_j) > 0$ if $t(s_i, s_j)$
holds, and $\widehat{w}(s_i, s_j) = 0$ otherwise.  Then, the
multiplicative weight of a trace $\tr \in \Tr_N$ is defined as $
w(\tr) = \prod_{i=0}^{N-1} \widehat{w}(\tr[i], \tr[i+1]) $. The
unweighted uniform sampling problem can be seen to be the special case
where $\widehat{w}(s_i, s_j) = 1$ whenever $t(s_i, s_j)$ holds.

\version{
\begin{table}[]

		\begin{tabular}{c|c}
			
			Symbol & Meaning                                    \\ \hline
			$X$        & Set of Boolean variables $x_1, x_2,\ldots,x_k$ \\
			$S$        & Set of states $s_1, s_2, \ldots, s_{2^k-1}$  \\
			$\Omega_N$ & Set of all traces `$ \omega $', of length $N$   \\
			$t$        & Transition function                        \\
			$w$        & Weight function                            \\
			$\Pi_v$    & Set of all paths `$\pi$' in a DD starting at node $ v $               
		\end{tabular}%
		\caption{Summary of notation}
		\label{tab:sym}

\end{table}
}{}
\subsection{Decision Diagrams}
We use Binary Decision Diagrams (BDDs)~\cite{Bryant86} and their
generalizations called Algebraic Decision Diagrams
(ADDs)~\cite{bahar1997algebric} to represent transition
functions/relations and counts of traces of various lengths between
states. 
Formally, both ADDs and BDDs are $4$-tuples $(X,T,\pi,G)$ where $ X $
is a set of Boolean variables, the finite set $ T $ is called the
carrier set, $ \pi: X \rightarrow \mathbb{N} $ is the diagram variable
order, and $ G $ is a rooted directed acyclic graph satisfying the
following properties:
(i) every terminal node of $ G $ is labeled with an element of $ T $, 
(ii) every non-terminal node of $ G $ is labeled with an element of $ X $ and has two outgoing edges labeled $0$ and $1$, and 
(iii) for every path in $ G $, the labels of visited non-terminal nodes must occur in increasing order under $ \pi $.

ADDs and BDDs differ in the carrier set $ T $; for ADDs $ T \subset
\mathbb{R} $ while for BDDs, $ T = \{0,1\} $.
Thus ADDs represent functions of the form $f:
2^X\rightarrow\mathbb{R}$ while BDDs represent functions of the form
$f: 2^X \rightarrow \{0,1\}$, as Directed Acyclic Graphs (DAG). Many
operations on Boolean functions can be performed in polynomial time in
the size of their ADDs/BDDs. This includes conjunction, disjunction,
if-then-else (ITE), existential quantification etc. for BDDs and
product, sum, ITE and additive quantification for ADDs.  The reader is
referred to ~\cite{Bryant86,bahar1997algebric} for more details on
these decision diagrams.

We denote the set of leaves of a decision diagram (DD) $ t $ by $ leaves(t) $, and the
root of the DD by $ root(t) $. We denote the vertices of the DAG by $ v
$, set of parents of $ v $ in the DAG by $ P(v) $, and value of a leaf $ v $ by $ val(v) $. A path
from a node $v$ to $root(t)$ in a DD $ t $, denoted as $ \Path =
v_0v_1\ldots v_h$, is defined to be a sequence of nodes such that
$v_0 = v$, $v_h = root(t) $ and $ \forall i\,\,\, v_{i+1} \in P(v_i) $. We 
use $ \Paths_v $ denote the set of all paths to the root starting from
some node $ v $ in the DD. For a set $V$ of nodes, we define $ \Paths_{V} =
\cup_{v\in V}\,\, \Paths_v $. The special set $ \Paths $ represents 
all paths from all leaves to the root of a DD.
\version{
Our notational setup is briefly summarized in Tab. \ref{tab:sym}.
}{}

	\section{Related Work}
\label{sec:rel}
We did not find any earlier work on sampling traces of sequential
circuits with provable uniformity guarantees. As mentioned earlier,
constrained random verification
tools~\cite{systemverilog1,systemverilog2,e,uvm,survey} permit values
of selected inputs to be chosen uniformly (or with specified bias)
from a constrained set at some steps of simulation.  Nevertheless, as shown
in Section~\ref{sec:intro}, this does not necessarily yield uniform
traces.

Arenas et al.~\cite{arenas2019efficient} gave a fully-polynomial
randomized approximation scheme for approximately counting words of a
given length accepted by a Non-deterministic Finite Automaton
(NFA). Using Jerrum et al's reduction from approximate counting to
sampling~\cite{jerrum1986random}, this yields an algorithm for
sampling words of an NFA. Apart from the obvious difference of
sampling words vs. sampling traces, Arenas et al's technique requires
the state-transition diagram of the NFA to be represented explicitly,
while our focus is on transition systems that implicitly encode
large state-transition diagrams.

Given a transition system, sampling traces of length $N$ can be
achieved by sampling satisfying assignments of the propositional
formula obtained by ``unrolling'' the transition relation $N$ times.
Technique for sampling models of propositional
formulas,
viz. ~\cite{achlioptas2018fast,sharma2018knowledge,gupta2019waps} for
uniform sampling and
~\cite{chakraborty2013scalable,chakraborty2014balancing,chakraborty2015parallel}
for almost uniform sampling, can therefore be used to sample
traces. The primary bottleneck in this approach is the linear growth
of propositional variables with the trace length and count of Boolean
state variables.
We compare our tool with state-of-the-art samplers
WAPS~\cite{gupta2019waps} and {\utwo}~\cite{chakraborty2015parallel},
and show that our approach performs significantly better.

	\section{Algorithms}
\label{sec:alg}

For clarity, we assume that the length of traces, i.e. $N$, is a power of $2$; the case when $N$ is not a power of $2$ is
discussed later.  
A naive approach would be to use a single BDD to represent all traces of length $N$, by appropriately unrolling the transition system, and then sample traces from the BDD. Such monolithic representations, however, are known to blow up~\cite{dudek2019addmc}. Therefore, we use $\log_2 N$ ADDs, where the $ i^{th} $ ADD ($ 1\le i \le \log_2 N $) represents the count of $2^i$-length paths
between different states of the transition system. The $i^{th}$ ADD is constructed from the $(i-1)^{th}$ ADD by a technique similar to iterative
squaring~\cite{burch1990symbolic,burch1990sequential}. A trace is sampled by recursively sampling states from each ADD according to the weights on the leaves. 

The detailed algorithm for constructing ADDs is presented in Algorithm
\ref{alg:madd}. We assume that the transition relation is defined over
$2$ copies, viz. $ X^0 $ and $ X^1 $, of the state variables, and that
an additional $ \log_2 N $ copies, viz. $ X^2\ldots X^{(\log_2 N)+1}
$, are also available. In each step of the for loop on line 2, the
$(i-1)^{th}$ ADD is squared to obtain the $i^{th}$ ADD after
additively abstracting out $X^i$ in line 4. Each ADD $ t_i(X^0,X^i,X^{i+1}) $ represents the count of $2^i$-length traces from $X^0$ to $X^{i+1}$ that pass through $X^i$ at the half-way point. Note that $ g(X^{i-1},X^i)
$ and $ g(X^i,X^{i+1}) $ in line 3 are the same ADD, but with
variables renamed. Finally, in line 5, we take the product of the $
\log_2N^{th}$ ADD with the characteristic functions for the initial and
final states, represented as ADDs. Although Algorithm \ref{alg:madd}
correctly computes all ADDs, in practice, we found that it often
scaled poorly for values of $ N $ beyond a few 10s.  On closer
scrutiny, we found that this was because the ADD $t_0$ (and other ADDs
derived from it) encoded
information about transitions from states unreachable in $ N $ steps
(and hence of no interest to us).  
Therefore, we had to aggressively optimize the ADD computations by
\emph{restricting} (see~\cite{coudert1990verifying}) each ADD $t_i$
with an over-approximation of the set of reachable states relevant to
that $t_i$. We discuss this optimization in detail in Sec. \ref{sec:iter}.

Once the ADDs are constructed, the sampling of the $ N+1 $ states of
the trace is done by Algorithm \ref{alg:ds}. States $\tr[0],\tr[N/2]$
and $\tr[N]$ are sampled from the $ \log_2N^{th}$ ADD in a call to
Algorithm \ref{alg:sfa} in line 2. Then Algorithm \ref{alg:dsr} is
recursively called to sample the first and second halves of the trace
in lines 3 and 4. In each recursive call, Algorithm \ref{alg:dsr}
invokes the procedure in Algorithm \ref{alg:sfa}, to sample the state
at the mid-point of the current segment of the trace under
consideration, and recurses on each of the two halves thus generated.

In {\sfa} (Algorithm \ref{alg:sfa}), the $ \log_2N^{th} $ ADD is used
as-is for sampling (lines 1,2), while other ADDs are first simplified
by substituting the values of state variables in $\tr[lo]$ and
$\tr[hi]$, that have been sampled previously and provided as inputs to
{\sfa} (lines 3,4). The role of the rest of the algorithm is to sample
a path from a leaf to the root in a bottom-up fashion, with
probability proportional to the value of the leaf. Towards this end, a
leaf is first sampled in lines 5-8. We assume access to a procedure $
weighted\_sample $ that takes as input a list of elements and their
corresponding weights, and returns a random element from the list with
probability proportional to its weight. Once a leaf is chosen, we
traverse up the DAG in the loop on line 9. This is done by iteratively
sampling a parent with probability proportional to the number of paths
reaching the parent from the root (lines 10-12). The quantity
$|\Paths_{v}| $ denotes the number of paths from a node $ v $ to the
root, and can be easily computed by dynamic programming. If some
levels are skipped between the current node $ v $ and its parent $ p
$, then the number of paths reaching the current node from the parent
are scaled up by a factor of $ 2^{level(p)-level(v)-1}$ (line 12).
This is because each skipped level contributes a factor of $2$ to the
number of paths reaching the root. Once a parent is sampled, the value
of the corresponding state variable is updated in the trace in lines
13-17, where the procedure $ getTracePosition $ is assumed to return
the index of the state (in the trace $\tr$) and the index of the state
variable (in the set $X$ of state variables) corresponding to the
parent node. $ getTracePosition $ can be implemented by maintaining a
map between the state variables and the variable order in the DD. The
random values for variables in the skipped levels between the parent
and the current node are sampled in lines 18 and 19.

\paragraph*{Non Power-of-2 trace lengths}
When the trace length $N$ is not a power of two, we modify the given
sequential circuit so that the distribution of traces of length $N'~(>
N)$ of the modified circuit is identical to the distribution of
length-$N$ prefixes of these traces.
Conceptually, the modification is depicted in
Fig. \ref{fig:counter}. Here, the ``Saturate-at-N'' counter counts up
from $0$ to $N$ and then stays locked at $N$.  Once the count reaches
$ N $, the next state and current state of the original circuit are
forced to be identical, thanks to the multiplexer.  Therefore, the
modified circuit's trace, when projected on the latches of the
original circuit, behaves exactly like a trace of the original circuit
up to $N$ steps.  Subsequently, the projection remains stuck at the
state reached after $N$ steps. Hence, by using the modified circuit
and by choosing $ N' = 2^{\lceil(\log_2 N)\rceil} $, we can assume
w.l.o.g. that the length of a trace to be sampled is always a power of
2.

\paragraph*{Weighted Sampling}
A salient feature of Algorithms 1-4 is that the same framework can be
used for weighted sampling (instead of uniform) as defined in
Section~\ref{sec:prelim}, with one small modification: if the input $ t_0
$ to Algorithm \ref{alg:madd} is an ADD instead of a BDD, where the
values of leaves are the weights of each transition, then it can be
shown that {\ds} will sample a trace with probability proportional to
its weight, where the weight of a trace is define
multiplicatively as in Section~\ref{sec:prelim}.

\begin{algorithm}
	\caption{$makeADDs(t_0, N, f, I)$}
	\label{alg:madd}
	\begin{algorithmic}[1]
		\INPUT 
		
		$ t_0$: 1-step transition relation
		
		$ N $: trace length 
		
		$ f $: characteristic function of target states
		
		$ I $: characteristic function of initial states
		\OUTPUT ADDs $ t_1 \ldots t_{\log_2 N}$: $ 2^i $-step transition relations $ t_i $
		\State $ g \gets t_0 $;
		\For {$i = 1, 2, \ldots , \log_2 N$}
		\State $ t_i (X^{0},X^i,X^{i+1}) \gets g(X^{0},X^{i}) \times g(X^i,X^{i+1}) $;
		
		\Comment $ \times $ is ADD multiplication
		\State $ g \gets \exists X^i \,\,\, t_i $;
		\Comment Additively abstract vars in $ X^i $
		\EndFor
		\State $ t_{\log_2 N} \gets t_{\log_2 N} \wedge f(X^{(\log_2 N) + 1}) \wedge I(X^0)$
		\State \Return $ t_1 \ldots t_{\log_2N} $
	\end{algorithmic}
\end{algorithm}

\begin{algorithm}
	\caption{$\ds(t_1,\ldots,t_{\log_2 N})$}
	\label{alg:ds}
	\begin{algorithmic}[1]
		\State $ \tr \gets []$ \Comment Initialize empty trace
		
		/* Sample 0, $N/2$ and $N^{th}$ states from $ \log_2 N^{th} $ ADD */
		\State $ \tr[0],\tr[N/2],\tr[N] \gets sampleFromADD(\log_2N,t_{\log_2 N}, 0, N, \tr) $;
		
		/* Sample states $0 \ldots N/2$ */
		\State $\tr[0\ldots N/2]\gets \dsr((\log_2N) - 1, 0, N/2, \tr)$;
		
		/* Sample states $N/2 \ldots N$ */
		\State $\tr[N/2\ldots N]\gets \dsr((\log_2 N) - 1, N/2, N, \tr)$;
		
		\State \Return $ \tr $
	\end{algorithmic}
\end{algorithm}

\begin{algorithm}
	\caption{$\dsr(i, \lo, \hi, \tr, t_1,..,t_{\log_2 N})$}
	\label{alg:dsr}
	\begin{algorithmic}[1]
		\State $ \mi \gets (\lo + \hi)/2 $;
		\State $ \cdot, \tr[\mi], \cdot \gets sampleFromADD(i,t_i, \lo, \hi, \tr) $; 
		
		\Comment Sample $ \tr[\mi] $. ($ \tr[\lo], \tr[\hi] $ unchanged) 
		\State $\tr[\lo\ldots \mi]\gets \dsr(i-1, \lo,\mi,\tr)$;
		\State $\tr[\mi\ldots \hi]\gets \dsr(i - 1, \mi, \hi, \tr)$;
		\State \Return $ \tr $
	\end{algorithmic}
\end{algorithm}

\begin{algorithm}
	\caption{$sampleFromADD(i,t_i, \lo, \hi, \tr)$}
	\label{alg:sfa}
	\begin{algorithmic}[1]
		\State $ \mi \gets (\lo + \hi)/2 $;
		\If {$i==\log_2N$} \Comment Use whole ADD for sampling
			\State $ \hat{t} \gets t_i $;
		\Else \Comment Reduce ADD with states previously sampled
			\State $\hat{t} \gets Substitute(t_i, \tr[\lo], \tr[\hi])$
		\EndIf
		\State $ wtList \gets [] $; \Comment Array for weights
		
		/*Sample a leaf*/
		\For {$ v_l \in leaves(\hat{t}) $ } 
			\State $wtList[l] \gets val(v_l)*|\Paths_{v_l}|$
		\EndFor
		\State $v \gets weighted\_sample(wtList, leaves(\hat{t}))$ 
		
		/*Sample parents up to root*/
		\While {$v \ne root(\hat{t})$}
			\For {$ p \in P(v)$} \Comment Find weights of all parents
				\State $wtList[p] \gets 2^{level(p)-level(v)-1}*|\Paths_{p}|$; 
				
				\Comment Weight adjusted for skipped levels
			\EndFor
			\State $p^* \gets weighted\_sample(wtList, P(v))$;
			\State $ j_s,j_b \gets getTracePosition(p^*, i, \lo, \hi) $;
			
			\Comment $ j_s $ is state index, $ j_b $ is variable bit index
			\If {$then\_child(p^*) == v$} 
				\State $\tr[j_s][j_b] \gets True$
			\Else 
				\State $\tr[j_s][j_b] \gets False$
			\EndIf
			\For {each $v_{skipped}$ between $ p^* $ and $ v $}
				\State $ j_s,j_b \gets getTracePosition(v_{skipped}, i, \lo, \hi) $;
				\State $\tr[j_s][j_b] \gets random\_bit()$ 
				\Comment For skipped vars 
			\EndFor
			\State $v \gets p^*$
		\EndWhile
		\State\Return $ \tr[\lo], \tr[\mi], \tr[\hi] $
	\end{algorithmic}
\end{algorithm}
\begin{figure}
\centering
\includegraphics[width=2.3in]{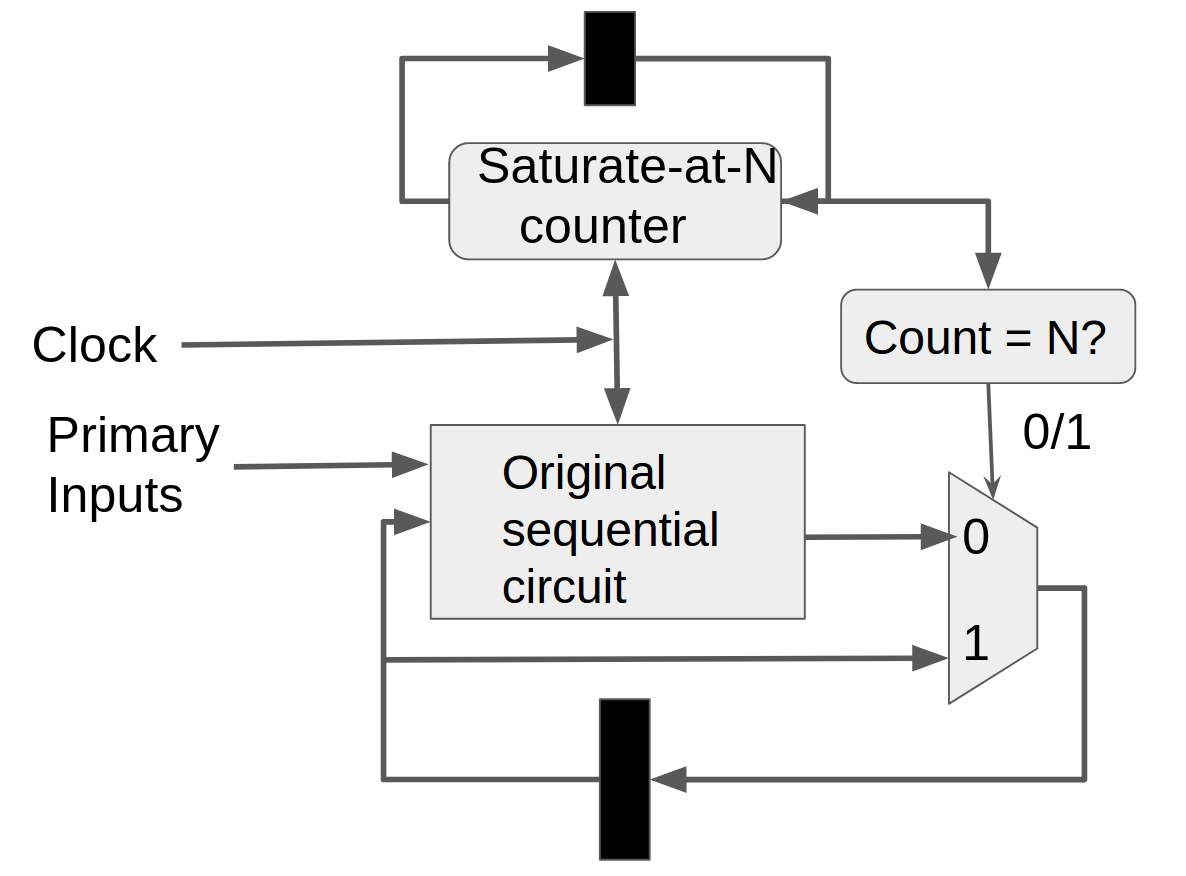}
\caption{Modified Circuit for Non-Power-of-2 Trace Lengths}
\label{fig:counter}
\end{figure}
%
%
%
%
%

	\section{Improved Iterative Squaring}
\label{sec:iter}
In this section, we present a more efficient version of
Alg. \ref{alg:madd}. To see where gains in efficiency can be made,
note that the $ t_{i} $s generated using Alg. \ref{alg:madd}, encode
transitions that are never used during sampling. For instance, the ADD
$ t_{(\log_2N)-1} $ as constructed by Alg. \ref{alg:madd}, is only used
by the procedure {\ds} for sampling states $ \tr[N/4] $ (given
$\tr[0]$ and $ \tr[N/2] $) and $ \tr[3N/4] $ (given $ \tr[N/2] $ and $
\tr[N] $). Thus, $ t_{(\log_2N)-1} $ should only be concerned with
states reachable in exactly 0,$ N/4 $, $ N/2 $, $ 3N/4 $ or $ N $
steps from the initial set. However, the $ t_{(\log_2N) - 1} $
constructed by Alg. \ref{alg:madd} also contains information about
other $ 2^{(\log_2N)-1} $-step transitions from states not reachable in
those many step from the initial set. This information is clearly
superfluous and only serves to increase the size of the ADD. Such
 information is present in all $ t_i $s and exists because
the iterative squaring framework of Alg. \ref{alg:madd} squares all
transitions in the loop on lines 2-4 regardless of the initial state,
final state and reachability conditions. We give an improved squaring
framework, presented in Algs. \ref{alg:madd_a} and \ref{alg:crs}. The
idea is to first compute (over-approximations of) sets of states reachable in exactly $ i $
steps from the initial set, for $ 1\le i \le N $
(Alg. \ref{alg:crs}). We then restrict each ADD $ t_i $ by the over-approximations of only those
reachable state sets it depends on (Alg. \ref{alg:madd_a}).

The set $ \rso[i-1] $ for $ 1\le i\le\log_2N $ in Alg. \ref{alg:crs}
represents the set that will be used for restricting the $ X^0 $
variable set of $ t_{i} $, while the set $ \rst[i-1] $ will be used
for restricting the $ X^i $ variable set of $ t_i $. The
(over-approximate) set of states reachable after exactly $ j $ steps
from the initial state, denoted $ r_j $, is computed in line 5
starting from the initial set by taking the (over-approximate) image
under $ t_0 $ of the reachable set after $ j-1 $ steps.  Computing an
exact image is often difficult for large benchmarks, hence an
over-approximation of the image can be used.  The literature contains
a wide spectrum of heuristic techniques that can be used to trade-off
space for time of computation. Once $ r_j $ is computed, we disjoin
the appropriate elements of $ \rso $ and $ \rst $ with $ r $ in lines
7-11. The special case of $ (N/2) ^{th}$ reachable set is handled
separately in lines 12-13.

After $ \rso $ and $ \rst $ sets are computed, we use them to restrict
$ g $ and $ \hat{g} $ in lines 3-4 of Alg. \ref{alg:madd_a}. The
restrict operation is the one proposed in
\cite{coudert1990verifying}. If $ f = Restrict(g,h) $ then $ f = g$
wherever $ h $ is true, and $ f $ is undefined otherwise. This
operation can be more efficient than conjunction, and is sufficient
for our purposes as we explicitly enforce initial state condition in
Line 7 of Alg. \ref{alg:madd_a}.
\begin{algorithm}
	\caption{$makeADDs(t_0, N, \rso, \rst, f, I)$}
	\label{alg:madd_a}
	\begin{algorithmic}[1]
		\INPUT 
		
		$ t_0 $: 1-step transition function
		
		$ N $: trace length 
		
		$ f $: final-state function
		
		$ I $: initial-state function
		
		$ \rso,\rst $: reachable state-sets
		\OUTPUT ADDs $ t_1 \ldots t_{\log_2N}$: $ 2^i $-step transition relations $ t_i $
		\State $ g \gets t_0 $;
		\For {$i = 1, 2, \ldots , \log_2N$}
		\State $ \hat{g}(X^i,X^{i+1}) \gets Restrict(g(X^i,X^{i+1}), \rst[i-1](X^i)) $
		\State $ g(X^{0},X^{i}) \gets Restrict(g(X^{0},X^{i}), \rso[i-1](X^0)) $;
		\State $ t_i (X^{0},X^i,X^{i+1}) \gets g(X^{0},X^{i}) \times \hat{g}(X^i,X^{i+1}) $;
		\State $ g \gets \exists X^i \,\,\, t_i $;
		\Comment Additively abstract vars in $ X^i $
		\EndFor
		\State $ t_{\log_2N} \gets t_{\log_2N} \wedge f(X^{(\log_2N) + 1}) \wedge I(X^0)$
		\State \Return $ t_1 \ldots t_{\log_2N} $
	\end{algorithmic}
\end{algorithm}

\begin{algorithm}
	\caption{$computeReachableSets(t_0, I)$}
	\label{alg:crs}
	\begin{algorithmic}[1]
		\State $ r_0 \gets I $; \Comment Initialize $ r $ to be the initial state set
		\State $ \rso \gets [I,\ldots,I]$;
		
		\Comment Initialize array of $ (\log_2N) $ initial state functions.
		\State $ \rst \gets [0,\ldots,0]$;
		
		\Comment Initialize array of $ (\log_2N) $ Boolean 0 functions. 
		\For {$ j \in \{1,\ldots,N\} $}
		\State $ r_j \gets \textit{\textbf{Im}}(r_{j-1},t_0)$
		\Comment Find (over approx.) image of $ r $ under $ t_0 $
		\For {each $ i \in \{0,1,2,\ldots (\log_2N)-2\}$}
		\If {$j\%(2^{i+1}) == 0$}
		\State $ \rso[i] \gets \rso[i] \lor r_j $
		\Else
		\State $ \rst[i] \gets \rst[i] \lor r_j $
		\State break;
		\EndIf
		\EndFor
		\If {$ j == N/2 $}
		\State $ \rst[(\log_2N) - 1]  = \rst[(\log_2N) - 1] \lor r_j $
		\EndIf
		\EndFor
		\State \Return $\rso,\rst$
	\end{algorithmic}
\end{algorithm}

	\section{Analysis}
\label{sec:analysis}

\subsection{Hardness of Counting/Sampling Traces}
Counting and sampling satisfying assignments of an arbitrary Boolean
formula, say $ \phi $, can be easily reduced to counting and sampling,
respectively, of traces of a transition system.  From classical
results on counting and sampling
in~\cite{valiant,stockmeyer83,jerrum1986random,BGP2000}, it follows
that counting traces is \#P-hard and uniformly sampling traces can be
solved in probabilistic polynomial time with access to an NP-oracle.

To see how the reduction works, suppose the support of $\phi$ has $n$
variables, say $x_1, \ldots x_n$.  We construct a transition system
$(S,\Sigma,t, I, F)$, where $S = \{0,1\}^n$ and the set of state
variables is $X = \{x_1, \ldots x_n\}$. We let $\Sigma = \{0,1\}$ and
define the transition function $t:\{0,1\}^n \times
\{0,1\}\rightarrow\{0,1\}^n $as follows: $ t(x_1,\ldots,x_n,a)[0] =
\phi(x_1,\ldots,x_n)$ and $ t(x_1,\ldots,x_n,a)[1] =
t(x_1,\ldots,x_n,a)[2] = \ldots = t(x_1,\ldots,x_n,a)[n] = 0$, for $a
\in \{0,1\}$. In other words, the $0^{th}$ next-state bit is
determined by $\phi$ regardless of the input $a$, while the rest of
the next-state bits are always $0$. We define $I = \{0,1\}^n $ and $F
= \{1000\cdots0\}$.  It is easy to see that counting/sampling traces
of length $1$ of this transition system effectively counts/samples
satisfying assignments of $\phi$.

\subsection{Random Walks and Uniform Traces}
It is natural to ask if uniform trace-sampling can be achieved by a
Markovian random walk, wherein the outgoing transition from a state is
chosen according to a probability distribution specific to the state.
Unfortunately, we show below that this cannot always be done.  Since
uniform sampling is a special case of weighted sampling, the
impossibility result holds for weighted trace sampling too.

Consider the transition system in Fig. \ref{fig:ckt}. We've seen in
Section~\ref{sec:intro} that there are 7 traces of length 4.  Hence a
uniform sampling would generate each of these traces with probability
$1/7$.  Suppose, the probability of transitioning to state $s_j$ from
state $s_i$ is given by $\Pr[(s_i,s_j)]$.  For uniform sampling, we
require $\Pr[(s_i, s_j)] > 0$ if $\exists a \in \Sigma.\, s_j = t(s_i,
a)$, and also $\sum_{s_j\,:\,\exists a,\, s_j = t(s_i,a)}
\Pr[(s_i,s_j)] = 1$. Now, consider the traces $\tr_1 = s_0s_1s_1s_1s_1
$ and $ \tr_2 = s_0s_1s_1s_1s_2 $.  
Let $ \Pr[(s_0,s_1)] = c~(> 0)$ and $ \Pr[(s_1,s_1)] = d~(> 0)$. This
implies that $ \Pr[(s_1,s_2)] = 1-d~(> 0)$. Thus, the probability of
sampling $ \tr_1 $ is $ c.d^3 $. For uniformity, $ c.d^3 =
1/7$. Similarly, from $ \tr_2 $, we get $ c.d^2.(1-d) = 1/7$. From
these two equations, we obtain $ c.d^2 = 2/7 $. Therefore, $d =
\frac{cd^3}{cd^2} = 1/2 $.  It follows from the equation $ cd^3 = 1/7
$ that $ c = 8/7 $.  However, this is not a valid probability measure.
Therefore, it is impossible to uniformly sample traces of this
transition system by performing a Markovian random walk.

\subsection{Correctness of Algorithms}
\begin{table}[]
		\begin{tabular}{l|c|c}
			& \textbf{$X^0$} & \multicolumn{1}{c}{\textbf{$X^i$}}  \\ \hline
			$t_1$ & $j\in \{0,2,4,6,...\}$  & \multicolumn{1}{c}{$j\in \{1,3,5,7,...\}$}   \\ \hline
			$t_2$ & $j\in \{0,4,8,12,...\}$ & \multicolumn{1}{c}{$j\in \{2,6,10,14,...\}$} \\ \hline
			$t_3$ & $j\in \{0,8,16,...\}$   & $j\in \{4,12,20,...\}$ \\ \hline                       
			$...$ & ...   & ...
		\end{tabular}%
	\caption{Reachable sets $ r_{j} $ that $ X^0, X^i $ variables of $ t_i $ depend on}
	\label{tab:reach}
\vspace{-0.6cm}
\end{table}

We now turn to proving the correctness of algorithms presented in the
previous section.  We first prove the correctness of the improved
iterative squaring framework (Sec. \ref{sec:iter}). Alg. \ref{alg:crs}
(lines 8-10) ensures that $ \rso[i-1] $ is computed as a disjunction
of $ r_j $'s for values of $ j $ given in the first column and row $ i
$ of Tab. \ref{tab:reach}, while $ \rst[i-1] $ is computed from $ r_j
$'s for values of $ j $ given on the $ i^{th} $ row and second
column. Therefore, to show the correctness of Algs. \ref{alg:madd_a}
and $ \ref{alg:crs} $, we show in Lemma \ref{lem:crs1} that the $X^0$
and $ X^i $ variable sets of $ t_i $ will only be instantiated with
(over-approximations of) sets of states reachable in the number of
steps given in the appropriate column of Tab. \ref{tab:reach}.

\begin{lemma}
	Let $ S^{p}_{q} $ denote the set of states that the variable set $ X^p $ of $ t_q $ will be instantiated with by Alg. \ref{alg:ds}, when used in conjunction with Algs. \ref{alg:crs} and \ref{alg:madd_a}. Then $ \forall s \in S^{0}_{i} $, we have $ s \in r_j $ for some $ j $ given in column 1 and row $ i $ of Tab. \ref{tab:reach}, and $ \forall s \in S^{i}_{i} $, we have $ s \in r_j $ for some $ j $ given in column 2 and row $ i $ of Tab. \ref{tab:reach}.
	\label{lem:crs1}
\end{lemma}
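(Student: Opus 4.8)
The plan is to prove the statement by a two-layer induction that first pins down \emph{where} each $t_i$ is used during sampling and then \emph{which} states can legally be bound to its variable blocks. The outer layer is a purely combinatorial claim about the recursion of {\ds}/{\dsr}: unrolling Algs.~\ref{alg:ds} and~\ref{alg:dsr}, one shows by induction on decreasing $i$ that $t_i$ is only ever invoked on a segment $[\lo,\hi]$ with $\hi-\lo=2^i$ and $\lo$ a multiple of $2^i$, in which case its variable blocks $X^0,X^i,X^{i+1}$ are bound to trace positions $\lo$, $\mi=\lo+2^{i-1}$, and $\hi$. Since $\lo$ and $\hi$ are multiples of $2^i$ while $\mi\equiv 2^{i-1}\pmod{2^i}$, these are exactly the entries in column~1 (for $X^0$) and column~2 (for $X^i$) of row~$i$ of Tab.~\ref{tab:reach}. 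It then remains to show that the state occupying position $j$ always lies in $r_j$.

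For that I would carry a strengthened invariant: every state written into $\tr$ at a position $j$ is reachable from $I$ in \emph{exactly} $j$ steps. Since $r_j$ over-approximates the set of exactly-$j$-step reachable states, this invariant immediately yields the lemma. It is seeded at the top level, where the final conjunction in Alg.~\ref{alg:madd_a} with $I(X^0)$ and $f(X^{(\log_2 N)+1})$ forces every nonzero leaf of $t_{\log_2 N}$ to encode a genuine length-$N$ trace, so $\tr[0]\in I=r_0$ and $\tr[N/2],\tr[N]$ are exactly $N/2$- and $N$-step reachable. For the inductive step, {\dsr} invokes $t_i$ with endpoints $\tr[\lo],\tr[\hi]$ already fixed and, by hypothesis, exactly $\lo$- and $\hi$-step reachable; I would then argue that any sampled midpoint $\tr[\mi]$ with nonzero count is reachable from $\tr[\lo]$ in exactly $2^{i-1}$ steps, hence from $I$ in exactly $\mi=\lo+2^{i-1}$ steps.

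The technical heart, and the step I expect to be the main obstacle, is justifying that implication in the presence of the $Restrict$ operations, which may emit arbitrary don't-care values outside their care sets. I would discharge it with an auxiliary claim, proved by induction on $i$, that the relation $g$ at the end of the $i$-th iteration of Alg.~\ref{alg:madd_a} agrees with the \emph{exact} count of length-$2^i$ traces out of $s_0$ for every $s_0$ reachable in a multiple of $2^i$ steps. The crux is a masking argument: for such $s_0$ the factor $g(s_0,X^i)$ lies on the care set $\rso[i-1]$ and is therefore exact, so it is nonzero only at midpoints reachable from $s_0$ in $2^{i-1}$ steps; each such midpoint occupies a column-2 position and hence lies in $\rst[i-1]$, precisely where the factor $\hat g(X^i,X^{i+1})$ is also exact. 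Thus every don't-care value that $Restrict$ leaves in $\hat g$ is annihilated by a structural zero of $g$, so neither the product in line~5 nor the additive abstraction $\exists X^i$ in line~6 ever sums a garbage entry. Making this airtight hinges on the containments $\rso[i-1]\subseteq\rso[i-2]$ together with the arithmetic observations that a multiple of $2^i$ is a multiple of $2^{i-1}$ and that $\lo+2^{i-1}$ is a multiple of $2^{i-1}$, which ensure that the care set queried at level $i$ falls inside the set on which the induction already guarantees correctness at level $i-1$.
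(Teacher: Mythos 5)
Your proof is correct, and in its first two layers it follows essentially the same route as the paper: the paper's own argument is a single downward induction on $i$ establishing $S^0_{i-1}=S^0_i\cup S^i_i$ and that the $X^i$ block of $t_i$ is only instantiated with states reachable in $2^{i-1}$ steps from those instantiating $X^0$; your factorization into a position-tracking induction ($\lo$ a multiple of $2^i$, $\mi\equiv 2^{i-1}\bmod 2^i$) plus the invariant that the state written to $\tr[j]$ is exactly-$j$-step reachable carries the same content, just packaged differently. The genuine divergence is your third layer. The paper's proof simply asserts that sampled midpoints are reachable in $2^{i-1}$ steps from the endpoint states, implicitly leaning on Lemma~\ref{lem:ti} --- which is proved only for Alg.~\ref{alg:madd}, i.e.\ without $Restrict$ --- and thereby leaves unaddressed exactly the circularity you identify: Lemma~\ref{lem:crs1} is what licenses the $Restrict$ calls in Alg.~\ref{alg:madd_a}, yet the reachability of the sampled states depends on the restricted ADDs carrying correct counts on the queried inputs. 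Your masking argument (the upward induction showing $g$ is exact on first argument in $\rso[i-1]$ after iteration $i$, with the don't-care values that $Restrict$ leaves in $\hat{g}$ annihilated by structural zeros of $g$, using $\rso[i-1]\subseteq\rso[i-2]$ and the chaining property $s\in r_j\Rightarrow \textit{Im}(s)\subseteq r_{j+1}$) closes that gap and is more rigorous than what the paper provides. What the paper's version buys is brevity; what yours buys is an explicit guarantee that the over-approximate image and $Restrict$ operations cannot corrupt the counts used for sampling, which is the step a careful reader would most want to see justified.
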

\begin{proof}
	We show by induction on $ i $ from $ \log_2N $ down to $ 1 $. The base case is shown by the fact that $ t_{\log_2N} $ is used exactly once by {\ds} and the $ X^0 $ variables are used only for sampling the initial state while $ X^i $ is used for sampling $ \tr[N/2] $. Thus $S^0_{\log_2 N} \subseteq r_0 $ and $S^{\log_2 N}_{\log_2 N} \subseteq r_{N/2} $. The former condition is satisfied by the limits of the for-loop in Line 6 of Alg. \ref{alg:crs}, while the latter condition is satisfied by lines 12-13 of Alg. \ref{alg:crs}. This completes the base case. 
	
	Now assume that the lemma holds for some $ i $. We will prove that the lemma holds for $ i-1 $ as well. First note that $ t_i $ is used by {\sfa} for sampling some state $ \tr[m] $ given states $ \tr[m-2^{i-1}] $ and $ \tr[m+2^{i-1}] $. Thereafter, $ t_{i-1} $ is used in 2 cases: (1) for sampling $ \tr[m+2^{i-2}] $ given $ \tr[m] $ and $ \tr[m+2^{i-1}] $; and (2) for sampling $ \tr[m-2^{i-2}] $ given  $ \tr[m] $ and $ \tr[m-2^{i-1}] $. Thus the $ X^0 $ variables of $ t_{i-1} $ will be instantiated with the same states as for $ X^0 $ variables of $ t_i $ in case (1). In case (2), $ X^0 $ vars of $ t_{i-1} $ will be instantiated with the same states as for $ X^i $ vars of $ t_i $. Thus the states instantiating $ X^0 $ vars of $ t_{i-1} $ are the union of the states instantiating $ X_0 $ and $ X^i $ variables of $ t_i $, i.e., $ S^0_{i-1} = S^0_{i} \cup S^i_{i}  $. The values in Tab. \ref{tab:reach} reflect this fact, and by our inductive assumption $S^0_{i}$ and $S^i_{i}$ were computed correctly. This proves that $ \forall s \in S^{0}_{i} $, $ s \in r_j $ for some $ j $ given in column 1 and row $ i $ of Tab. \ref{tab:reach}. To complete the inductive argument we still need to show that $ \forall s \in S^{i}_{i} $, $ s \in r_j $ for some $ j $ given in column 2 and row $ i $ of Tab. \ref{tab:reach}. To see this, first note that the $ X^i $ variables of $ t_i $ will only be instantiated with states reachable in $ 2^{i-1} $ steps from the states instantiating the $ X^0 $ variables of $ t_i $. This is reflected in Tab. \ref{tab:reach}. For instance, in row 3 ($i=3$), $ r_4,r_{12},r_{20}\ldots $ in column 2 are exactly the set of states reachable in $ 2^{i-1} = 4 $ steps from $ r_0,r_{8},r_{16}\ldots $ respectively, in column 1. Since we showed that $ S^{0}_{i} $ has been computed correctly, this completes the proof. 
\end{proof}

Let $ \pc{l}{s_i}{s_j} $ denote the number of traces of length $2^l$
starting in state $ s_i $ and ending in state $ s_j $. Note that $
\pc{l}{s_i}{s_j} = \sum_{s_k\in S} \pc{l-1}{s_i}{s_k} \times
\pc{l-1}{s_k}{s_j} $. We use the fact that {\sfa} ensures that the
parent of a node $ v $ is sampled independently of the path from an
ADD leaf to $ v $ chosen so far. Conditional independence also holds
for whole traces; given the states at two indices in a trace, the
states within the trace segment delineated by the indices are sampled
independently of the states outside the trace segment.  The following
lemmas characterize the behavior of the sampling framework
(Algs. \ref{alg:ds}--\ref{alg:sfa}).
\begin{restatable}{lemma}{lemtwo}
  For $ 1\le i\le \log_2 N $, the ADD $t_i$ computed by $makeADDs$ is
  such that $ \forall s_{j_1},s_{j_2}, s_{j_3} \in S $, we have
  $t_i(s_{j_1},s_{j_2},s_{j_3})$ $=$
  $\pc{i-1}{s_{j_1}}{s_{j_2}}\times \pc{i-1}{s_{j_2}}{s_{j_3}}$
	\label{lem:ti}
\end{restatable}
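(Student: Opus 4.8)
The plan is to prove the statement by induction on $i$, tracking not $t_i$ directly but the auxiliary ADD $g$ carried across iterations of the for-loop of $makeADDs$. The invariant I would maintain is: \emph{just before line 3 of the $i$-th iteration, $g$ represents the function $(s_a,s_b)\mapsto \pc{i-1}{s_a}{s_b}$}, i.e. $g$ counts traces of length $2^{i-1}$ between its two state arguments. Granting this invariant, the claim for $t_i$ is immediate: line 3 forms the ADD product $g(X^0,X^i)\times g(X^i,X^{i+1})$, and since ADD multiplication realizes the pointwise product of the represented functions, $t_i(s_{j_1},s_{j_2},s_{j_3}) = \pc{i-1}{s_{j_1}}{s_{j_2}}\times \pc{i-1}{s_{j_2}}{s_{j_3}}$, exactly as stated.

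First I would establish the base of the invariant. Before the first iteration, $g = t_0$, the one-step transition relation, which assigns $1$ to each pair $(s_a,s_b)$ with $t(s_a,s_b)$ and $0$ otherwise; this is precisely $\pc{0}{s_a}{s_b}$, the count of length-$2^0=1$ traces from $s_a$ to $s_b$. Next I would carry out the inductive step on $g$: assuming the invariant before iteration $i$, line 3 gives the product above, and line 4 replaces $g$ by $\exists X^i\, t_i$. Since additive quantification realizes summation over the abstracted variables, the new $g$ satisfies $g(s_{j_1},s_{j_3}) = \sum_{s_{j_2}\in S}\pc{i-1}{s_{j_1}}{s_{j_2}}\times\pc{i-1}{s_{j_2}}{s_{j_3}}$, which by the recurrence $\pc{l}{s_i}{s_j} = \sum_{s_k\in S}\pc{l-1}{s_i}{s_k}\times\pc{l-1}{s_k}{s_j}$ equals $\pc{i}{s_{j_1}}{s_{j_3}}$. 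This re-establishes the invariant for iteration $i+1$ and closes the induction. Here I would be explicit about the renaming convention noted after Algorithm \ref{alg:madd}: $g(X^0,X^i)$ and $g(X^i,X^{i+1})$ are the same ADD applied to two overlapping blocks of state copies sharing the midpoint copy $X^i$, and that shared copy is exactly what line 4 sums out.

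The main obstacle is reconciling this clean loop invariant with the two modifications that lie outside the bare squaring recurrence. The first is the restrict operations of the improved framework (Algorithm \ref{alg:madd_a}, lines 3--4), where $t_i$ is built from $Restrict(g,\rst[i-1])$ and $Restrict(g,\rso[i-1])$ rather than from $g$ itself, so the represented function may differ from $\pc{i-1}{\cdot}{\cdot}$ outside the restriction sets. I would dispatch this by appealing to Lemma \ref{lem:crs1}, which guarantees that the $X^0$ and $X^i$ arguments of $t_i$ are only ever instantiated by \ds{} with states lying in the very reachable sets used for restriction; since $Restrict$ agrees with $g$ on those states, the stated equality holds for every triple the sampler can actually query. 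The second is the final line, where $t_{\log_2 N}$ is conjoined with the initial- and final-state characteristic functions; I would read Lemma \ref{lem:ti} as a statement about the $t_i$ emitted by the squaring loop, before this conjunction, and defer the effect of enforcing $I$ and $F$ to the subsequent sampling-correctness argument, noting that on triples with $s_{j_1}\in I$ and $s_{j_3}\in F$ the conjunction leaves the counted value unchanged.
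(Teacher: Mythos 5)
Your proof is correct and follows essentially the same route as the paper's: induction on $i$ with the invariant that $g$ equals $\pc{i-1}{\cdot}{\cdot}$ entering each iteration, the product on line 3 giving $t_i$, and the additive abstraction on line 4 realizing the recurrence $\pc{i}{s_{j_1}}{s_{j_3}} = \sum_{s_{j_2}}\pc{i-1}{s_{j_1}}{s_{j_2}}\times\pc{i-1}{s_{j_2}}{s_{j_3}}$. Your closing paragraph on the $Restrict$ operations of Algorithm~\ref{alg:madd_a} (discharged via Lemma~\ref{lem:crs1}) and on the final conjunction with $I$ and $f$ addresses caveats the paper's proof leaves implicit, and is consistent with how the paper's surrounding text treats them.
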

\begin{proof}
	We will prove by induction on $ i $.
		
		\textit{Base case}: We have $ \forall s_{j_1},s_{j_2} \in S\,\, t_0(s_{j_1},s_{j_2}) = \pc{0}{s_{j_1}}{s_{j_2}} $ by definition. From line 3 of Alg. \ref{alg:madd}, we then have $\forall s_{j_1},s_{j_2}, s_{j_3} \in S$, $t_1(s_{j_1},s_{j_2},s_{j_3}) = \pc{0}{s_{j_1}}{s_{j_2}}\times \pc{0}{s_{j_2}}{s_{j_3}}) $
		
		\textit{Induction step}: Assume the lemma holds up to some $ i $, i.e. $\forall s_{j_1},s_{j_2}, s_{j_3} \in S \,\, t_{i}(s_{j_1},s_{j_2},s_{j_3}) = \pc{i-1}{s_{j_1}}{s_{j_2}}\times \pc{i-1}{s_{j_2}}{s_{j_3}}$. After execution of line 4 of Alg. \ref{alg:madd}, we will have $\forall s_{j_1}, s_{j_3} \in S \,\, g(s_{j_1},s_{j_3}) = \sum_{s_{j_2}} \pc{i-1}{s_{j_1}}{s_{j_2}}\times \pc{i-1}{s_{j_2}}{s_{j_3}} = \pc{i}{s_{j_1}}{s_{j_3}}$. Then in the next iteration of the loop after line 3, we will have  $\forall s_{j_1},s_{j_2}, s_{j_3} \in S \,\, t_{i+1}(s_{j_1},s_{j_2},s_{j_3}) = \pc{i}{s_{j_1}}{s_{j_2}} \times \pc{i}{s_{j_2}}{s_{j_3}} $.  
\end{proof}

\begin{restatable}{lemma}{lemone}
	Let Z denote the random path from a leaf to the root of ADD $
        \hat{t}$ (see Alg. 4) chosen by {\sfa}. Then
	\begin{equation}
	\forall \Path \in \Paths\,\,\,\Pr[Z=\Path] = \frac{val(\Path[0])}{\sum_{v\in leaves(\hat{t})} val(v)\cdot|\Paths_v|}
	\label{eqn:path_prob}
	\end{equation}  
\end{restatable}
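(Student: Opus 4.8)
The plan is to compute $\Pr[Z=\Path]$ directly from the choices {\sfa} makes and then collapse the resulting product by telescoping. Write $\Path = v_0 v_1 \ldots v_h$ with $v_0$ a leaf and $v_h = root(\hat t)$. The algorithm first draws the leaf $v_0$ (lines 5--8) with probability $val(v_0)\cdot|\Paths_{v_0}|$ normalised over all leaves, and then, in each pass of the while loop, draws the next node $v_{i+1}\in P(v_i)$ with probability proportional to the weight $2^{level(v_{i+1})-level(v_i)-1}\cdot|\Paths_{v_{i+1}}|$ assigned in line 12. Invoking the independence noted just before the lemma---the parent chosen at $v_i$ does not depend on how $v_i$ was reached---these choices multiply, so
\[
\Pr[Z=\Path]=\frac{val(v_0)\cdot|\Paths_{v_0}|}{\sum_{v\in leaves(\hat t)}val(v)\cdot|\Paths_v|}\cdot\prod_{i=0}^{h-1}\frac{2^{level(v_{i+1})-level(v_i)-1}\cdot|\Paths_{v_{i+1}}|}{\sum_{q\in P(v_i)}2^{level(q)-level(v_i)-1}\cdot|\Paths_q|}.
\]

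The crucial observation is that the denominator of the $i$-th factor equals $|\Paths_{v_i}|$. I would first record the recurrence $|\Paths_v|=\sum_{q\in P(v)}2^{level(q)-level(v)-1}\cdot|\Paths_q|$ with $|\Paths_{root(\hat t)}|=1$, which is exactly the dynamic program computing $|\Paths_v|$: every path leaving $v$ first steps to some parent $q$, and each of the $level(q)-level(v)-1$ levels skipped on the edge $(v,q)$ independently doubles the number of completions to the root. Substituting this identity into each denominator rewrites the product as $\prod_{i=0}^{h-1}\frac{2^{level(v_{i+1})-level(v_i)-1}\cdot|\Paths_{v_{i+1}}|}{|\Paths_{v_i}|}$. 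The $|\Paths_{v_i}|$ factors now telescope to $1/|\Paths_{v_0}|$ (using $|\Paths_{v_h}|=1$), while the powers of two combine to $2^{level(v_h)-level(v_0)-h}$. Multiplying by the leaf factor cancels $|\Paths_{v_0}|$ and leaves $\Pr[Z=\Path]=\dfrac{val(v_0)\cdot 2^{level(v_h)-level(v_0)-h}}{\sum_{v\in leaves(\hat t)}val(v)\cdot|\Paths_v|}$.

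It remains to account for the surplus factor $2^{level(v_h)-level(v_0)-h}$, whose exponent is precisely the total number of levels skipped along $\Path$. These skipped levels are exactly the state-variable bits that {\sfa} fixes by independent fair coin flips in lines 18--19, each contributing an independent factor $\tfrac12$. Hence, once the path is read as a fully specified object---the node sequence $\Path$ together with the values chosen at its skipped levels---the surplus $2^{level(v_h)-level(v_0)-h}$ is exactly cancelled, yielding $\Pr[Z=\Path]=\dfrac{val(\Path[0])}{\sum_{v\in leaves(\hat t)}val(v)\cdot|\Paths_v|}$, which is \eqref{eqn:path_prob}.

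The step I expect to be the main obstacle is this bookkeeping of skipped levels: one must ensure that the $2^{level(q)-level(v)-1}$ weights used in line 12, the count $|\Paths_v|$ and its recurrence, and the fair-coin assignments of lines 18--19 are mutually consistent, so that each skipped variable is counted exactly once (with the correct orientation $level(q)>level(v)$) and the powers of two cancel cleanly rather than leaving a residual bias. A secondary, lighter obligation is to justify the factorisation of $\Pr[Z=\Path]$ into per-step probabilities, i.e. to state explicitly the conditional-independence property asserted before the lemma and confirm that {\sfa} never revisits a level, so that distinct factors in the product refer to disjoint random choices.
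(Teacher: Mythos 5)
Your proof is correct and follows essentially the same route as the paper's: the leaf is chosen with probability proportional to $val(v)\cdot|\Paths_v|$, each parent with probability proportional to $2^{level(p)-level(v)-1}\cdot|\Paths_p|$, and the identity $|\Paths_v|=\sum_{p\in P(v)}2^{level(p)-level(v)-1}\cdot|\Paths_p|$ collapses the chain-rule product by telescoping. The only place you go beyond the paper is in explicitly isolating the residual factor $2^{level(v_h)-level(v_0)-h}$ and cancelling it against the fair coin flips for the skipped variables in lines 18--19 (reading $\Path$ as a fully specified path); the paper's proof leaves that bookkeeping implicit, so your version is, if anything, the more complete one.
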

\begin{proof}
	The leaf $ v_l $ is sampled with probability $ \Pr[\Path[0]=v_l] =  \frac{val(v_l)\cdot|\Paths_{v_l}|}{\sum_{v\in leaves(\hat{t})} val(v)\cdot|\Paths_v|} $. Thereafter, each parent $ p^* $ is sampled with probability $\Pr[\Path[i]=p^* | \Path[i-1]=v] =  \frac{|\Paths_{p^*}|\cdot2^{\lvlDiff}}{\sum_{p\in P(v)} |\Paths_p|\cdot2^{\lvlDiff}} $, where $ \lvlDiff = level(p)-level(v)-1 $. But note that $ \Paths_{v} = \sum_{p\in P(v)} (|\Paths_p|\cdot2^{\lvlDiff}) $. Then, substituting in the identity $ \Pr[Z=\Path] =  \big(\Pr\big[\Path[0]\big]\cdot\prod_i\Pr\big[\Path[i]\big|\Path[i-1]\big]\big)$, gives the lemma.  
\end{proof}

In the next two lemmas, `$ \lo $' and `$ \hi $' refer to trace indices passed as arguments to {\sfa}, and $ \mi = (\lo + \hi)/2 $.
\begin{restatable}{lemma}{lemthree}
	Suppose {\sfa} is invoked with $ i < \log_2 N$, $\tr[\lo] = s_{j_1}$ and $\tr[\hi] = s_{j_3}$. Let $M$ denote the random state returned by {\sfa} for $\tr[\mi]$. Then
	for all $s_{j_2}\in S$, we have $Pr\big[M = s_{j_2} ~\big|~ \tr[\lo] = s_{j_1},\tr[\hi] = s_{j_3}\big] = \frac{\pc{i-1}{s_{j_1}}{s_{j_2}}\times\pc{i-1}{s_{j_2}}{s_{j_3}}}{\pc{i}{s_{j_1}}{s_{j_3}}}$
	 \label{lem:illn}
\end{restatable}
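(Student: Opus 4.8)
The plan is to reduce the behaviour of {\sfa} on the substituted diagram to the path-sampling identity (Eqn.~\ref{eqn:path_prob}) and the structural identity of Lemma~\ref{lem:ti}, and then to evaluate the normalizing constant using the convolution recurrence for $\pc{\cdot}{\cdot}{\cdot}$. First I would pin down the object that {\sfa} actually samples from: since $i < \log_2 N$, line 4 sets $\hat{t} = Substitute(t_i, \tr[\lo], \tr[\hi])$, so $\hat{t}$ is a diagram over the midpoint variable set $X^i$ with $\hat{t}(s_{j_2}) = t_i(s_{j_1}, s_{j_2}, s_{j_3})$ for every $s_{j_2}\in S$. Applying Lemma~\ref{lem:ti} immediately gives $\hat{t}(s_{j_2}) = \pc{i-1}{s_{j_1}}{s_{j_2}}\times\pc{i-1}{s_{j_2}}{s_{j_3}}$. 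Because the two endpoints are fixed by the substitution, all of the subsequent analysis is automatically conditioned on $\tr[\lo]=s_{j_1}$ and $\tr[\hi]=s_{j_3}$, which is exactly the conditioning in the statement.

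Next I would compute $\Pr[M = s_{j_2}]$ for a fixed $s_{j_2}$. Reading $s_{j_2}$ as a complete assignment to $X^i$, following its edges from $root(\hat{t})$ determines a unique leaf-to-root path $\Path$ with $val(\Path[0]) = \hat{t}(s_{j_2})$, and the bits of $s_{j_2}$ on the levels skipped by $\Path$ are exactly those filled in by lines 18--19. Hence the event $M = s_{j_2}$ factors as the event that {\sfa} samples the path $\Path$ (governed by Eqn.~\ref{eqn:path_prob}) together with the event that each skipped bit is drawn so as to agree with $s_{j_2}$ (probability $2^{-\lvlDiff}$ per skipped level). The main obstacle is precisely this bookkeeping: the $2^{\lvlDiff}$ weights that line 12 and the scaled counts $|\Paths_v|$ attach to skipped levels must cancel against the $2^{-\lvlDiff}$ probability of the uniform fill-in of those same levels, so that the net probability of producing the particular state $s_{j_2}$ collapses to $\hat{t}(s_{j_2})\big/\big(\sum_{v\in leaves(\hat{t})} val(v)\cdot|\Paths_v|\big)$. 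I would make this cancellation explicit by noting that, for a leaf $v$, the scaled quantity $|\Paths_v|$ equals the number of complete assignments of $X^i$ routed to $v$; summing the per-state probabilities over all states consistent with a fixed $\Path$ then returns $val(\Path[0])$ over the denominator, and the total probability mass over all leaves is $1$.

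Finally I would evaluate the denominator $D = \sum_{v\in leaves(\hat{t})} val(v)\cdot|\Paths_v|$. Since $|\Paths_v|$ counts the assignments of $X^i$ mapped to leaf $v$, this sum equals $\sum_{s_{j_2}\in S}\hat{t}(s_{j_2}) = \sum_{s_{j_2}\in S}\pc{i-1}{s_{j_1}}{s_{j_2}}\times\pc{i-1}{s_{j_2}}{s_{j_3}}$, which by the convolution identity $\pc{i}{s_{j_1}}{s_{j_3}} = \sum_{s_k\in S}\pc{i-1}{s_{j_1}}{s_k}\times\pc{i-1}{s_k}{s_{j_3}}$ equals $\pc{i}{s_{j_1}}{s_{j_3}}$. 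Combining this value of $D$ with the previous step yields $\Pr\big[M=s_{j_2}\mid \tr[\lo]=s_{j_1},\tr[\hi]=s_{j_3}\big] = \pc{i-1}{s_{j_1}}{s_{j_2}}\times\pc{i-1}{s_{j_2}}{s_{j_3}}\big/\pc{i}{s_{j_1}}{s_{j_3}}$, as required.
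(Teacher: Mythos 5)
Your proposal is correct and follows essentially the same route as the paper's own proof: substitute the endpoints to obtain $\hat{t}$, identify the leaf value with $\pc{i-1}{s_{j_1}}{s_{j_2}}\times\pc{i-1}{s_{j_2}}{s_{j_3}}$ via Lemma~\ref{lem:ti}, invoke the path-sampling probability of Lemma~\ref{eqn:path_prob}'s equation, and evaluate the normalizer as $\pc{i}{s_{j_1}}{s_{j_3}}$ by the convolution identity. The only difference is that you make explicit the cancellation between the $2^{\lvlDiff}$ weighting of skipped levels and the $2^{-\lvlDiff}$ probability of the uniform bit fill-in, a bookkeeping step the paper elides by identifying each state directly with its leaf-to-root path.
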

\begin{proof}
	We note that for any ADD $ t_i $ s.t. $ i < \log_2N $, we
        reduce the ADD by substituting $ \tr[\lo],\tr[\hi] $ in line 4
        of Alg. \ref{alg:sfa}. In the resultant ADD $ \hat{t} $, each
        paths from root to leaf yields a valuation for $ \tr[\mi]
        $. Therefore, if $ \Path $ is the path traversed in $ \hat{t}
        $ corresponding to some state $ s_{j_2} $, then $
        Pr\big[\tr[\mi] = s_{j_2}\big|\tr[\lo] = s_{j_1},\tr[\hi] =
          s_{j_3}\big] = \Pr[ Z = \Path ] $. We now need to prove that
        the R.H.S. of Eqn. \ref{eqn:path_prob} is the same as the
        desired conditional probability expression. In
        Eqn. \ref{eqn:path_prob}, the numerator $ val(\Path[0]) =
        t_i(s_{j_1},s_{j_2},s_{j_3}) =
        \pc{i-1}{s_{j_1}}{s_{j_2}}\times \pc{i-1}{s_{j_2}}{s_{j_3}}$,
        by Lemma \ref{lem:ti}. The denominator of
        Eqn. \ref{eqn:path_prob} is $ \sum_{v\in leaves(\hat{t})}
        (val(v)*|\Paths_v|) =\sum_{s_{j_2}}
        \pc{i-1}{s_{j_1}}{s_{j_2}}\times \pc{i-1}{s_{j_2}}{s_{j_3}}$
        which is $ \pc{i}{s_{j_1}}{s_{j_3}} $.
\end{proof}
\begin{restatable}{lemma}{lemfour}
  Let {\sfa} be invoked with $ i = \log_2 N $, and let $L$, $M$ and $H$
  denote the random states returned for $\tr[\lo]$, $\tr[\mi]$ and $\tr[\hi]$
  respectively. Then 
	for all $s_{j_1},s_{j_2},s_{j_3}\in S$ s.t. $I(s_{j_1})$ and  $f(s_{j_3})$ hold, we have $\Pr\big[(L =s_{j_1})\wedge(M =s_{j_2})\wedge(H =s_{j_3})\big]
	=\frac{\pc{i-1}{s_{j_1}}{s_{j_2}}\times\pc{i-1}{s_{j_2}}{s_{j_3}}}{|\Tr_N|}$
	\label{lem:ieln}
\end{restatable}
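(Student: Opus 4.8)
The plan is to mirror the structure of the proof of Lemma~\ref{lem:illn}, while exploiting the fact that at the top level ($i = \log_2 N$) the ADD $t_{\log_2 N}$ is used as-is (lines~2--3 of Alg.~\ref{alg:sfa}), so that a single root-to-leaf path simultaneously assigns all three variable blocks $X^0$, $X^{\log_2 N}$ and $X^{(\log_2 N)+1}$, i.e.\ the triple $(s_{j_1}, s_{j_2}, s_{j_3})$ returned as $L$, $M$ and $H$. Consequently, unlike Lemma~\ref{lem:illn} where we obtained only a conditional distribution for the midpoint, here we obtain the full joint distribution of the three sampled states, and the normalizing denominator will turn out to be the global constant $|\Tr_N|$ rather than a conditional count.

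First I would fix a triple $(s_{j_1}, s_{j_2}, s_{j_3})$ with $I(s_{j_1})$ and $f(s_{j_3})$ true, and let $\Path$ be the unique root-to-leaf path in $t_{\log_2 N}$ that this triple traces. Since {\sfa} returns exactly the variable values read off $\Path$ (with lines~18--19 filling any skipped bits uniformly, which is consistent because the leaf value does not depend on those bits), the joint event $(L=s_{j_1}) \wedge (M=s_{j_2}) \wedge (H=s_{j_3})$ coincides with the event $Z = \Path$. I would then invoke Lemma~\ref{lem:one} (Eqn.~\ref{eqn:path_prob}) to write this probability as $val(\Path[0]) / \sum_{v\in leaves(t_{\log_2 N})} val(v)\cdot|\Paths_v|$.

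Next I would evaluate numerator and denominator separately. For the numerator, Lemma~\ref{lem:ti} gives $val(\Path[0]) = t_{\log_2 N}(s_{j_1},s_{j_2},s_{j_3}) = \pc{i-1}{s_{j_1}}{s_{j_2}}\times\pc{i-1}{s_{j_2}}{s_{j_3}}$, where the conjunction with $f(X^{(\log_2 N)+1})$ and $I(X^0)$ in line~6 of Alg.~\ref{alg:madd} is precisely what makes this identity hold on exactly the triples with $I(s_{j_1})$ and $f(s_{j_3})$ true (and makes the value zero otherwise). For the denominator, the quantity $\sum_{v} val(v)\cdot|\Paths_v|$ sums the ADD value over \emph{all} full assignments to the three variable blocks---the skipped-level factors folded into $|\Paths_v|$, as established in the proof of Lemma~\ref{lem:one}, account for the multiplicity of assignments routing to each leaf---so it equals $\sum_{s_{j_1},s_{j_2},s_{j_3}} t_{\log_2 N}(s_{j_1},s_{j_2},s_{j_3})$.

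The main obstacle, and the one genuinely new step relative to Lemma~\ref{lem:illn}, is identifying this denominator with $|\Tr_N|$. I would first drop all terms where $I(s_{j_1})$ or $f(s_{j_3})$ fails (they vanish by the $I,f$ conjunction), then collapse the inner sum over $s_{j_2}$ using the recurrence $\pc{\log_2 N}{s_{j_1}}{s_{j_3}} = \sum_{s_{j_2}} \pc{i-1}{s_{j_1}}{s_{j_2}}\times\pc{i-1}{s_{j_2}}{s_{j_3}}$ stated just before Lemma~\ref{lem:ti}. This leaves $\sum_{s_{j_1}:\,I(s_{j_1}),\; s_{j_3}:\,f(s_{j_3})} \pc{\log_2 N}{s_{j_1}}{s_{j_3}}$, which counts length-$N$ traces (recall $2^{\log_2 N}=N$) that begin in an initial state and end in a final state---precisely the definition of $|\Tr_N|$. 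Substituting the numerator and denominator back into Eqn.~\ref{eqn:path_prob} then yields the claimed expression $\pc{i-1}{s_{j_1}}{s_{j_2}}\times\pc{i-1}{s_{j_2}}{s_{j_3}} / |\Tr_N|$.
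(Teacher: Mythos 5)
Your proof is correct and follows essentially the same route as the paper's: the paper's own argument simply notes that $|\Tr_N| = \sum_{s_{j_1}\models I,\, s_{j_3}\models f} \pc{\log_2 N}{s_{j_1}}{s_{j_3}}$ and then repeats the structure of Lemma~\ref{lem:illn} (path probability from Eqn.~\ref{eqn:path_prob}, numerator from Lemma~\ref{lem:ti}, denominator collapsed over $s_{j_2}$), which is exactly your derivation. You merely spell out the details the paper leaves implicit, including the role of the $I\wedge f$ conjunction and the handling of skipped levels.
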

\begin{proof}
	By definition, $ |\Tr_N| = \sum_{s_{j_1},s_{j_3}}
	\pc{N}{s_{j_1}}{s_{j_3}} $, when $ s_{j_1} \models I $ and $ s_{j_3}
	\models f $. The rest of the proof is similar to that of Lem.
	\ref{lem:illn}.
\end{proof}
\begin{restatable}{theorem}{mainthm}
  \label{thm:main}
	Let $ Y $ be a random trace returned by an invocation of {\ds}. For all $ \tr \in \Tr_N $, we have $ \Pr[Y=\tr] = \frac{1}{|\Tr_N|} $.
\end{restatable}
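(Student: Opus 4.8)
The plan is to write $\Pr[Y = \tr]$ as a product of one factor per invocation of {\sfa} during the recursive descent of {\ds}, and then show this product telescopes to $1/|\Tr_N|$. Writing $\ell = \log_2 N$, recall the call structure: {\ds} first samples $\tr[0], \tr[N/2], \tr[N]$ with a single {\sfa} call at level $\ell$, then issues two {\dsr} calls on the halves $[0,N/2]$ and $[N/2,N]$; a {\dsr} call at level $i$ samples the midpoint of its segment via {\sfa} at level $i$ and recurses on its two sub-segments at level $i-1$, bottoming out at level $1$, at which point every index $0,1,\ldots,N$ has been assigned a state.

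The first step is to invoke the conditional-independence observation stated just before the lemmas: given the endpoint states of a trace segment, its interior is sampled independently of everything outside that segment. Since the two recursive {\dsr} calls act on segments that overlap only at a shared endpoint, and each uses only its own ADD together with the already-fixed endpoint states, the interiors of sibling segments are conditionally independent given those endpoints. This justifies the factorization
\begin{equation*}
\Pr[Y = \tr] = \Pr\big[\tr[0], \tr[N/2], \tr[N]\big] \cdot \prod_{(i,\lo,\hi)} \Pr\big[\tr[\mi] ~\big|~ \tr[\lo], \tr[\hi]\big],
\end{equation*}
where the product ranges over every segment $[\lo,\hi]$ processed by a {\dsr} call at some level $i < \ell$, and $\mi = (\lo+\hi)/2$.

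The second step substitutes the closed forms from the earlier lemmas and performs the telescoping. By Lemma~\ref{lem:ieln}, the top-level factor equals $\pc{\ell-1}{\tr[0]}{\tr[N/2]}\cdot\pc{\ell-1}{\tr[N/2]}{\tr[N]}/|\Tr_N|$; by Lemma~\ref{lem:illn}, the factor for a level-$i$ segment $[\lo,\hi]$ (with $\hi-\lo = 2^i$) equals $\pc{i-1}{\tr[\lo]}{\tr[\mi]}\cdot\pc{i-1}{\tr[\mi]}{\tr[\hi]}/\pc{i}{\tr[\lo]}{\tr[\hi]}$. The crux is that the denominator $\pc{i}{\tr[\lo]}{\tr[\hi]}$ of each level-$i$ factor coincides with exactly one numerator factor produced by the level-$(i+1)$ call that created $[\lo,\hi]$ as one of its two halves. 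Hence every $\pc{i}{\cdot}{\cdot}$ with $i\ge 1$ cancels in a matched numerator/denominator pair, leaving $\prod_{k=0}^{N-1}\pc{0}{\tr[k]}{\tr[k+1]}/|\Tr_N|$. I would make this rigorous by induction on recursion depth, showing that after cancelling levels $\ell$ down through $i$, the surviving expression is the product, over all level-$i$ segments, of their numerators, divided by $|\Tr_N|$.

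The final step observes that $\pc{0}{s_a}{s_b} = 1$ exactly when the single transition $t(s_a,s_b)$ holds and is $0$ otherwise; since $\tr\in\Tr_N$ is by definition a valid trace with $\tr[0]\in I$ and $\tr[N]\in F$, every surviving numerator factor equals $1$, giving $\Pr[Y=\tr] = 1/|\Tr_N|$. The main obstacle is the bookkeeping in the telescoping: one must index the level-$i$ segments correctly and verify that each level-$i$ denominator pairs with a unique level-$(i+1)$ numerator, which is precisely what the induction on recursion depth is designed to manage cleanly.
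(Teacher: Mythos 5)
Your proposal is correct and follows essentially the same route as the paper's own proof: factor $\Pr[Y=\tr]$ over the recursive halving using conditional independence, substitute Lemmas~\ref{lem:ieln} and~\ref{lem:illn}, telescope the $\pc{i}{\cdot}{\cdot}$ terms, and observe that each surviving $\pc{0}{\tr[k]}{\tr[k+1]}$ equals $1$ for a valid trace. Your write-up simply makes the telescoping bookkeeping explicit where the paper leaves it implicit.
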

 \begin{proof}
 	Recursively halving $ \tr $, we get $ \Pr[Y = \tr]  = \Pr[(\tr[0]=s_{j_1})\wedge(\tr[N/2]=s_{j_2})\wedge(\tr[N]=s_{j_3})]\cdot\Pr[(\tr[N/4]=s_{j_4}) | (\tr[0]=s_{j_1}) \wedge (\tr[N/2]=s_{j_2})]\cdot\Pr[(\tr[3N/4]=s_{j_5}) | (\tr[N/2]=s_{j_2}) \wedge (\tr[N]=s_{j_3})]\ldots$ Substituting values in the RHS from Lemmas \ref{lem:illn} and \ref{lem:ieln}, we get the result by noting that $\forall s_{j_1},s_{j_2}\in S\,\, \pc{0}{s_{j_1}}{s_{j_2}} \in \{0,1\}$ since the transition system is deterministic.
 \end{proof}

	\section{Empirical Evaluation}
We have implemented our algorithms in a tool called {\ts}. The objective of our empirical study was to compare {\ts}\footnote{Code available at \url{https://gitlab.com/Shrotri/tracesampler}} with other state-of-the-art approaches in terms of number of benchmarks solved as well as speed of solving.

\paragraph{\bfseries Experimental Setup}
As noted in Section~\ref{sec:rel}, {\uniw}~\cite{chakraborty2013scalable}, {\uni} \cite{chakraborty2014balancing} and
{\utwo}~\cite{chakraborty2015parallel} are state-of-the-art tools
for almost uniform sampling and
SPUR~\cite{achlioptas2018fast}, KUS~\cite{sharma2018knowledge} and
WAPS~\cite{gupta2019waps} are similar tools for uniform sampling of
SAT witnesses.
We compare {\ts} with {\utwo} and WAPS in our experiments, since these
are currently among the best almost-uniform and uniform samplers
respectively, of SAT witnesses. We invoke both WAPS and {\utwo} with
default settings. Although {\utwo} is capable of operating in
parallel, we invoke it in serial mode to ensure fairness of comparison.

We ran all our experiments on a high performance cluster. Each experiment
had access to one core on an Intel Xeon E5-2650 v2 processor running
at 2.6 GHz, with 4GB RAM. We used GCC 6.4.0 for compiling {\ts} with
O3 flag enabled, along with CUDD library version 3.0 with dynamic
variable ordering enabled. We set a timeout of 7200 seconds for each
experiment. For experiments that involved converting benchmarks in
Aiger format to BDD (explained below), we allotted 3600 seconds
out of 7200 exclusively for this conversion. We attempted to generate
5000 samples in each instance. We called an experiment successful or
completed, if the sampler successfully sampled 5000 traces within the
given timeout.

\paragraph{\bfseries Benchmarks}
We used sequential circuit benchmarks from the Hardware Model Checking
Competition~\cite{biere2010hardware} and
ISCAS89~\cite{brglez1989notes} suites. Each benchmark represents a
sequential circuit in the And-Inverter Graph (AIG) format.  In
general, primary outputs of such a circuit can indicate if target
states have been reached, and can be used to filter the set of traces
from which we must sample. In our experiments, however,
we ignored the primary outputs, and sampled from all traces of a given
length $N$ starting from the all-zero starting
state. We attempt uniform sampling of traces in our
  experiments, as the benchmarks do not provide weights for
  transitions.

As mentioned in Section~\ref{sec:prelim}, we need to existentially
quantify the primary inputs from the transition functions to get the
transition relations. This is done either explicitly or implicitly
depending on the sampler to be used. {\ts} requires the transition
relation $ t $ in the form of a BDD while WAPS and {\utwo} require a
CNF formula. We used a straightforward recursive procedure for
converting $ t $ (as AIG) to a BDD. We then quantified out the primary
inputs using library functions in CUDD. For converting $ t $ to CNF
there were two choices: (1) by obtaining the prime cover using a
built-in operation in CUDD, or (2) using Tseitin encoding to convert
the AIG to CNF by introducing auxiliary variables. The CNF obtained
from the first method has no auxiliary variables that need to be
existentially quantified; hence it can be used with WAPS and the D4
compiler, which does not support existential
quantification.\footnote{WAPS also can work with the DSharp compiler,
  which supports existential quantification. However, in our
  experiments we found that DSharp provided incorrect answers.} In
contrast, the second method can be used in conjunction with {\utwo},
since the auxiliary variables need to be projected out.

\version{
\begin{figure}
	\centering
	\includegraphics[width=2.5in]{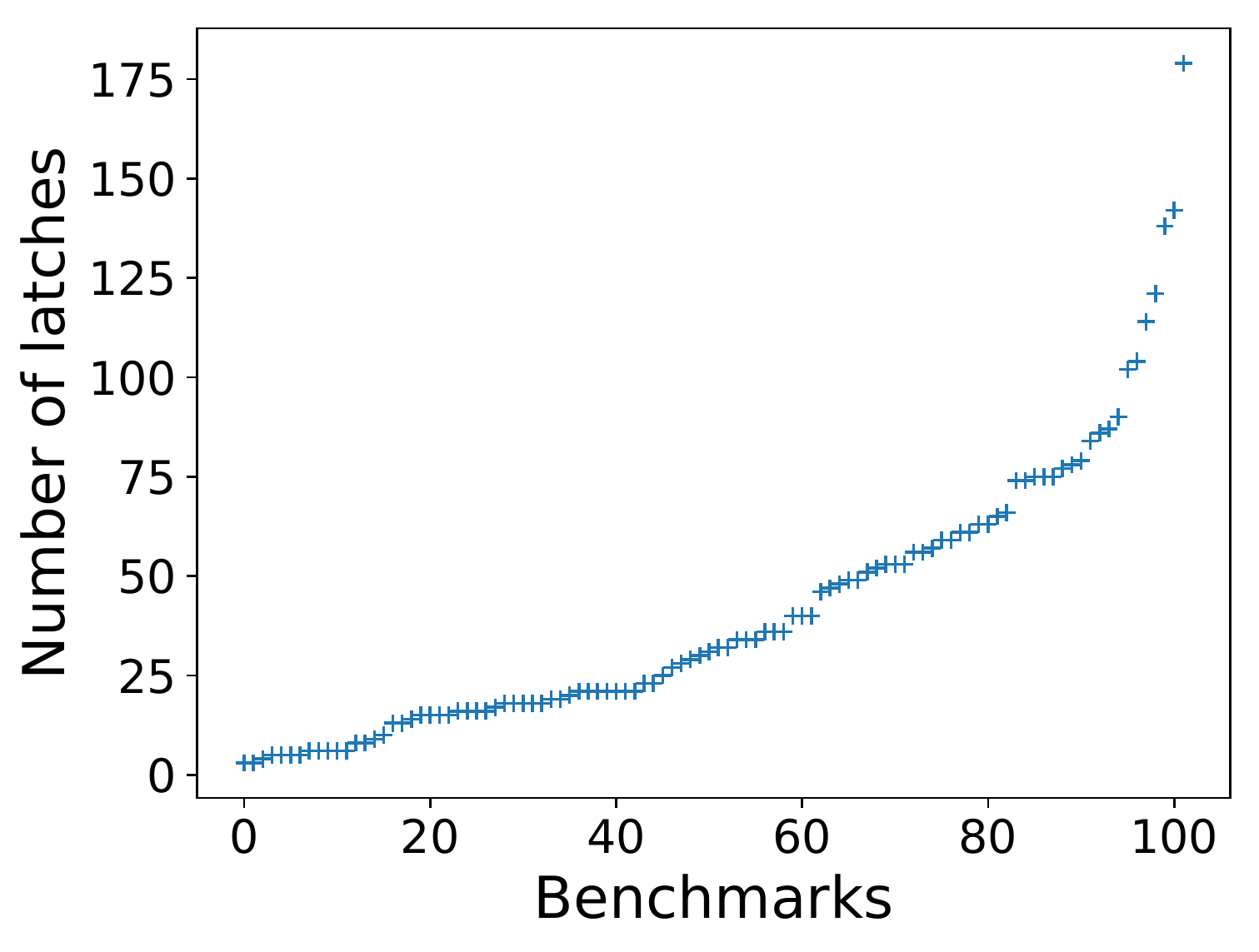}
	\caption{Distribution of benchmark sizes (number of latches)}
	\label{fig:bench_stats}
\end{figure}}{}
Within the available time and memory, we obtained a total
of 310 pre-processed AIG files, out of which 102 could be converted
to BDDs with primary inputs existentially quantified out. The number of latches for these 102 benchmarks ranged between 5 and 175, and the median number of latches was 32. \version{The distribution of number of of latches is depicted in Fig. \ref{fig:bench_stats}.}{} We
restricted the start state to be all-zeros since this is implicit in
the AIG format. For each benchmark, we attempted to sample traces
with lengths 2, 4, 8, 16, 32, 64, 128 and 256. We chose this range of
trace lengths since a vast majority of benchmarks in HWMCC-17
(particularly, benchmarks in the DEEP category) required bounds within
256~\cite{hwmcc-bounds}. Further, we observed that none of the tools
were able to consistently scale beyond traces of length 256. We refer
to a benchmark and a given trace length as an 'instance'. We thus
generated $102\times8 = 816$ instances for BDD based approaches. For
CNF based approaches, the unrolling was done by appropriately
unrolling the transition relation. Note that the first CNF-based
approach was applicable to only 102 benchmarks that could be converted
to BDDs, while the direct conversion from AIG to CNF was technically
possible for all 310 benchmarks. However, we primarily report on the
816 instances even for AIG-encoded CNFs. We discarded formulas with
more than $10^6$ clauses, as the files became too large.

\paragraph{\bfseries Results}
In our experiments we found that {\utwo} fared better with formulas
encoded from BDDs, vis-a-vis formulas encoded directly from
AIGs. All reported results are, therefore, on
BDD-encoded formulas. We only report on instances
with at least $100$ distinct traces of the given length, since trace sampling can be trivially implemented by enumerating traces, if
the trace-count is small.

We consistently found that {\ts} outperformed both WAPS and {\utwo} by
a substantial margin. We present a comparison of the performance of
the 3 tools on the 816 instances where BDD construction succeeded.  Figure \ref{fig:completed} shows
a cactus plot of the number of experiments completed in the given
time, with the number of instances on x--axis and the total time
taken on y--axis. A point $(x, y)$ implies that $x$ instances took
less than or equal to $y$ seconds to solve. {\ts} is able
to complete 502 experiments out of 816 -- almost 200 more than WAPS and 350 more than {\utwo}.

\version{
\begin{figure}
	\centering
	\includegraphics[width=2.5in]{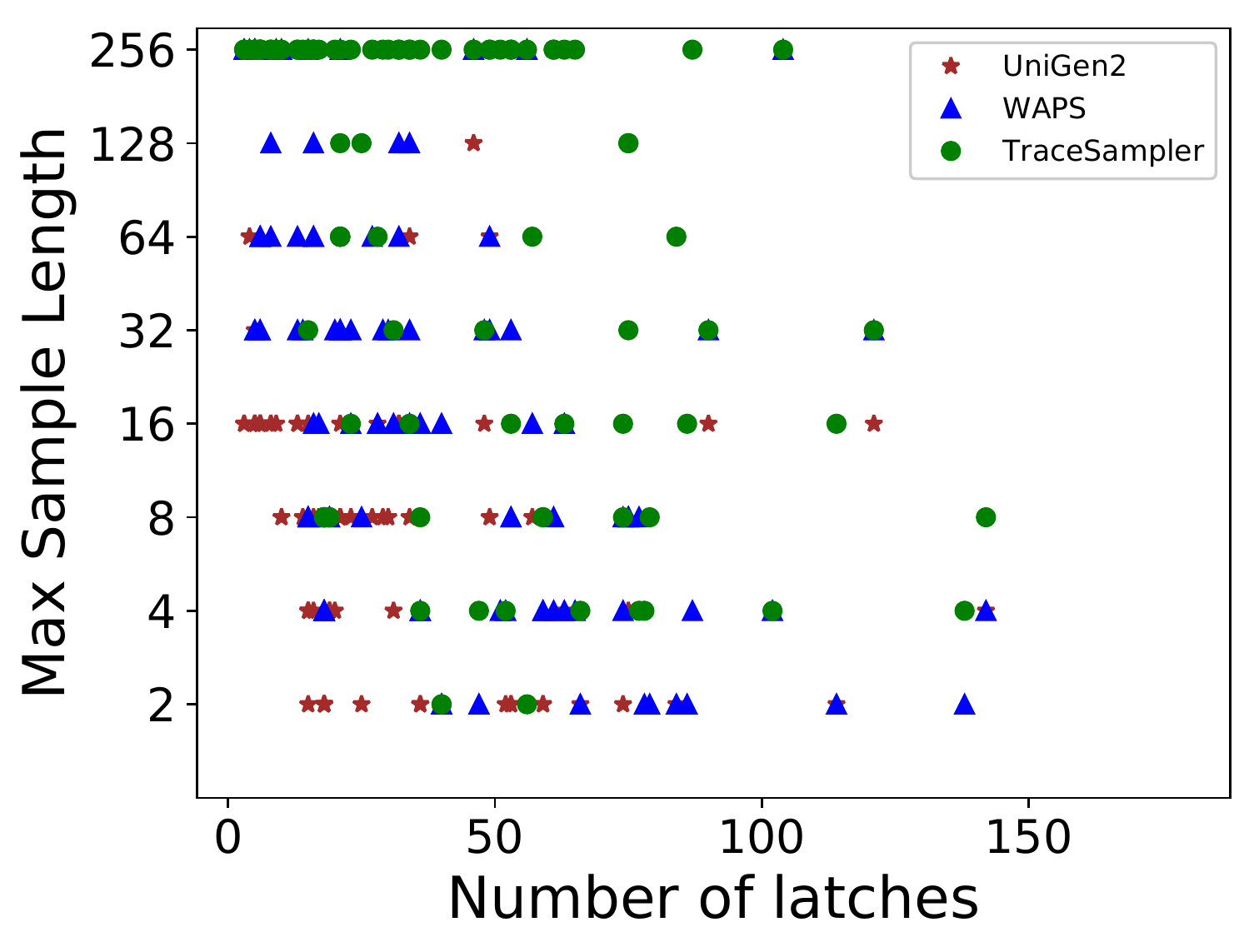}
	\caption{Length of longest trace sampled vs. number of latches for each benchmark}
	\label{fig:lenvlatch}
\end{figure}}{}
{\ts} was also fastest on the majority of instances.  Among a total of
503 instances on which at least one sampler succeeded, {\ts} was
fastest on 446 (88.7\%) while WAPS and {\utwo} were fastest on 33
(6.5\%) and 24 (4.8\%) respectively. For instances on which both tools
were successful, we found that on average (geometric mean) {\ts}
offered a speedup of 25$\times$ compared to {\utwo} and 3 relative
to WAPS. Overall, {\ts} was able to sample traces 3.5 times longer on
average (geometric mean) as compared to WAPS and 10 times longer as
compared to {\utwo}. Further, {\ts} is able to sample traces of length 256 from 52 benchmarks, while WAPS and {\utwo} are able to sample 256-length traces from 12 and 3 benchmarks respectively.\version{ Fig. \ref{fig:lenvlatch} depicts the distribution of the maximum length of traces each algorithm is able to sample from, relative to the size (number of latches) of the corresponding benchmarks. It can be seen that {\ts} is generally able to sample longer traces from larger benchmarks.}{}

WAPS and {\ts} proceed in two phases --- the compilation phase where a d-DNNF or ADDs are constructed, and the sampling phase where the constructed structures are traversed. When only considering the time required for compilation, {\ts} offered speedup factor of 16 compared to WAPS.

\begin{figure}
	\centering
	\includegraphics[width=2.5in]{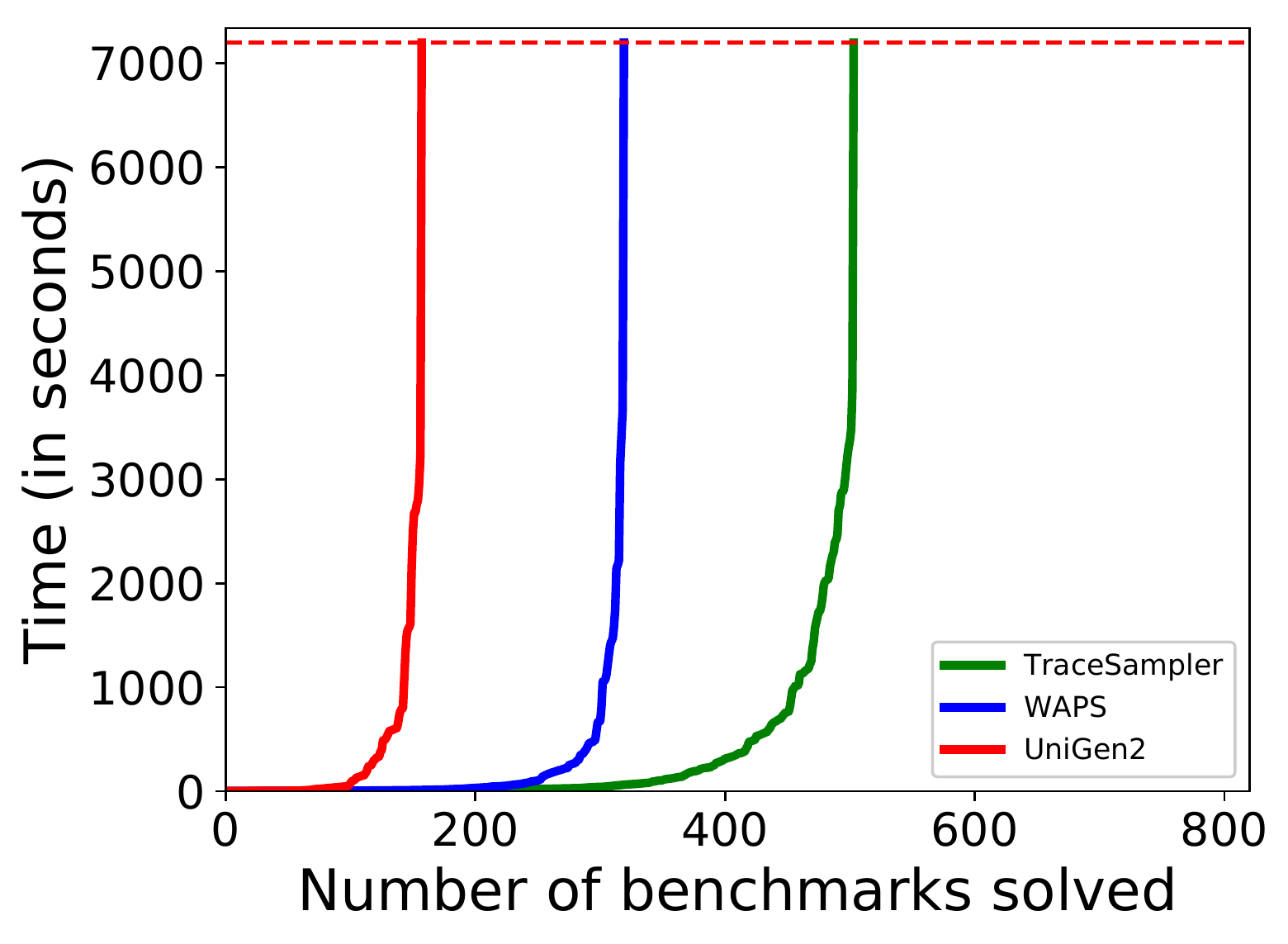}
	\caption{Performance Comparison of {\ts} with WAPS and {\utwo}.}
	\label{fig:completed}
\end{figure}
\version{
\paragraph{\bfseries Output Distribution}
While Theorem~\ref{thm:main} guarantees uniformity of distribution of
traces generated by {\ts}, we performed a simple experiment to compare
the actual distribution of traces generated by {\ts} with that
generated by WAPS -- a perfectly uniform sampler. The instance we
selected had 8192 distinct traces. We generated $ 10^6 $ traces samples using
both {\ts} and WAPS, 
computed the frequency of occurrence of each trace, and grouped traces
occurring with the same frequency. This is shown in
Fig. \ref{fig:dist} where a point $(x,y)$ indicates that $ x $
distinct traces were generated $ y $ times. It can be seen that the
distributions generated by {\ts} and WAPS are practically
indistinguishable, with Jensen-Shannon distance 0.003. Similar trends
were observed for other benchmarks as well.
\begin{figure}
	\centering
	\includegraphics[width=2.3in]{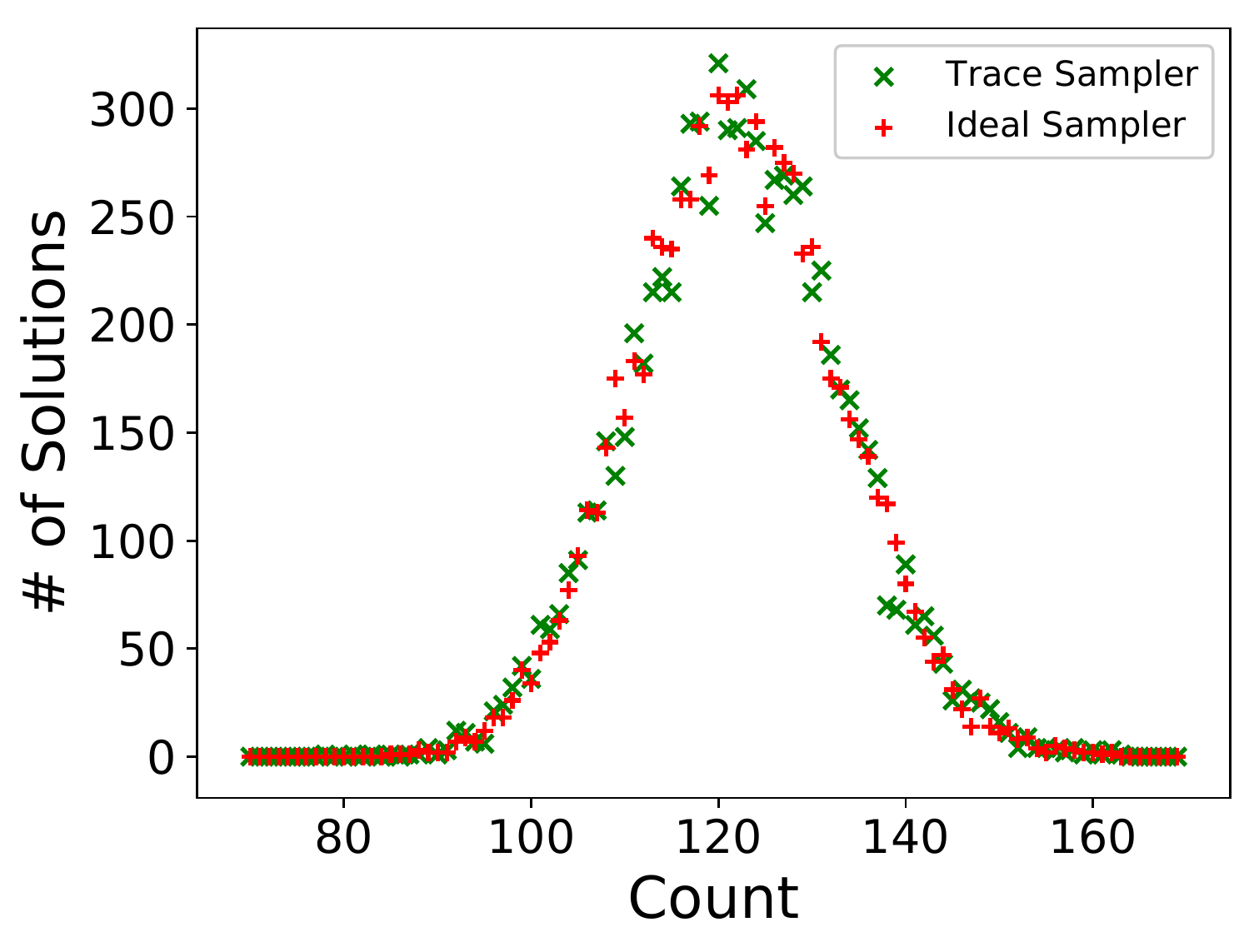}
	\caption{Distributions of generated samples.}
	\label{fig:dist}
\end{figure}
}{}

\paragraph{\bfseries Comparison with ApproxMC3}
The {\uni} series of samplers are based on the approximate counting
tool {\amc}. 
At the time of writing this paper, the latest version of {\amc},
called {\athree}, had not been incorporated into {\uni}. In order to
obtain an idea of the kind of performance gains one can expect from
{\uni} with updated counting sub-modules, we ran experiments to count
the number of traces of a given length with {\athree}. {\ts} was able
to sample traces $ 5\times $ longer than what {\athree} could count, while providing $17\times $ speedup, on average. Thus,
on benchmarks where BDD construction was successful, {\ts} was clearly
the best choice. However, {\athree} was able to count the number of
$8$-long traces in 62 cases out of 208 in which BDD construction
failed. Yet, the time required for sampling using {\utwo} usually far
exceeds that required to count, as the counting subroutine is invoked
multiple times for obtaining the desired number of samples. The times
reported for {\athree} therefore, are a generous lower-bound on the
times that would be required for sampling.

\paragraph{\bfseries Discussion}
Our experiments show that {\ts} is the algorithm of choice for
uniformly sampling traces and is even able to outperform the state-of-the-art model counter. We consistently observed that most of the
time used for ADD construction is spent in dynamic variable
reordering; given a good variable order, ADD construction is usually
very fast. In industrial settings, a good variable order may
be available for the circuits of interest. In addition, compilation
can often be done 'off-line' resulting in its cost getting amortized
over the generated samples. In this light, the compilation time
speedup of {\ts} relative to WAPS is encouraging.
 
A drawback of using BDDs and ADDs is that they often blow-up in
size. Indeed, we found that conversion of AIG to BDD
failed on 208 benchmarks. Nevertheless, we found that {\utwo} was also
unable to finish sampling a single instance (with trace length 8) of these 208 benchmarks, as well. This indicates that the problem may lie deeper in the  transition structure, rather than in the variable order.

It is worth noting that CRV runs typically span hundreds of thousands
of clock cycles, while we (and other approaches that provide
uniformity guarantees) can sample traces of a few hundred transitions
at present.  This is because trace sampling requires solving global
constraints over the entire length of the trace, while in CRV, local
constraints over short segments of an otherwise long trace need to be
solved.  We believe these are complementary strengths that can be used
synergistically.  Specifically, a CRV tool can be used to drive a
system into a targeted (possibly bug-prone) corner over a large number
of clock cycles.  Subsequently, one can ensure provably good coverage
of the system's runs in this corner by uniformly sampling traces for
the next few hundred cycles.  We believe this synergy can be very
effective in simulation-based functional verification.

	\section{Conclusion}
In this paper, we introduced a symbolic algorithm based on ADDs for
sampling traces of a transition system (sequential circuit) with
provable uniformity (or bias) guarantees. We demonstrated its
scalability vis-a-vis other competing approaches that provide similar
guarantees, through an extensive empirical study.  Our experience
indicates that there is significant potential to improve the
performance of our tool through engineering optimizations. 
An interesting direction for further research is to combine the strengths of
decision diagram based techniques (like {\ts}) with SAT-solving based
techniques (like {\utwo}) to build trace samplers that have the best
of both worlds.

	\begin{acks}
	The authors would like to thank Krishna S. and Shaan Vaidya for discussions on some initial ideas. This work was supported in part by an MHRD IMPRINT-1 grant (Project 6537) of Govt of India and NSF grants IIS-1527668, CCF-1704883, IIS-1830549, and an award from the Maryland Procurement Office.
	\end{acks}
	\bibliographystyle{ACM-Reference-Format}
	\bibliography{Refs}
\end{document}